\theoremstyle{definition}
\newtheorem{theorem}{Theorem}[section]
\newtheorem{lemma}[theorem]{Lemma}
\newtheorem{proposition}[theorem]{Proposition}
\newtheorem{corollary}[theorem]{Corollary}
\newtheorem{example}[theorem]{Example}
\newenvironment{allintypewriter}{\ttfamily}{\par}
\newcommand{\dder}{\Rightarrow}
\newcommand{\mtt}{\ttt}
\newcommand{\mrm}{\mathrm}
\newcommand{\failrm}{\mathrm{fail}}
\newcommand{\failtt}{\ttt{fail}}
\newcommand{\skiptt}{\ttt{skip}}
\newcommand{\tuple}[1]{\langle#1\rangle}
\newcommand{\ifte}[3]{\ttt{if}\ #1\ \ttt{then}\ #2\ \ttt{else}\ #3}
\newcommand{\ift}[2]{\ttt{if}\ #1\ \ttt{then}\ #2}
\newcommand{\tryt}[2]{\ttt{try}\ #1\ \ttt{then}\ #2}
\newcommand{\trye}[2]{\ttt{try}\ #1\ \ttt{else}\ #2}
\newcommand{\tryte}[3]{\ttt{try}\ #1\ \ttt{then}\ #2\ \ttt{else}\ #3}
\newcommand{\ttt}{\texttt}
\newcommand{\specinput}[1]{\gdef\@specinput{#1}}%
\newcommand{\specoutput}[1]{\gdef\@specoutput{#1}}
  \par\addvspace{.25\baselineskip}
  \par\addvspace{.25\baselineskip}
\newcommand{\G}{\mathcal{G}}
\tikzset{>=latex}
\tikzset{every edge/.style={draw,thick}}
\definecolor{bc-green-light}{RGB}{69, 191, 156}
\definecolor{bc-green-dark}{RGB}{37, 154, 85}
\definecolor{bc-blue-light}{RGB}{153, 187, 255}
\definecolor{bc-blue-dark}{RGB}{100, 99, 198}
\definecolor{bc-red-light}{RGB}{236, 107, 116}
\definecolor{bc-red-dark}{RGB}{161, 41, 41}
\definecolor{bc-pink-light}{RGB}{239, 161, 193}
\definecolor{bc-pink-dark}{RGB}{192, 62, 156}
\definecolor{bc-grey-light}{RGB}{196, 192, 200}
\definecolor{bc-grey-dark}{RGB}{135, 135, 135}
\tikzset{gp2 node/.style={draw, rounded rectangle, minimum height=.55cm, minimum size=6mm, thick}}
\tikzset{gp2 red node/.style={draw, rounded rectangle, minimum height=.55cm, minimum size=6mm, thick, fill=bc-red-light, TeXture={\tiny{\textcolor{bc-red-dark}{red}}}, TeXture sep=2pt}}
\tikzset{gp2 blue node/.style={draw, rounded rectangle, minimum height=.55cm, minimum size=6mm, thick, fill=bc-blue-light, TeXture={\tiny{\textcolor{bc-blue-dark}{blue}}}, TeXture sep=2pt}}
\tikzset{gp2 green node/.style={draw, rounded rectangle, minimum height=.55cm, minimum size=6mm, thick, fill=bc-green-light, fill=bc-green-light, TeXture={\tiny{\textcolor{bc-green-dark}{green}}}, TeXture sep=2pt}}
\tikzset{gp2 grey node/.style={draw, rounded rectangle, minimum height=.55cm, minimum size=6mm, thick, fill=bc-grey-light, TeXture={\tiny{\textcolor{bc-grey-dark}{grey}}}, TeXture sep=2pt}}
\tikzset{gp2 any node/.style={draw, rounded rectangle, minimum height=.55cm, minimum size=6mm, thick, fill=bc-pink-light, TeXture={\tiny{\textcolor{bc-pink-dark}{any}}}, TeXture sep=2pt}}
\tikzset{gp2 root/.style={draw, rounded rectangle, minimum height=.55cm, minimum size=6mm, thick, double, double distance=0.3mm}}
\tikzset{gp2 red root/.style={draw, rounded rectangle, minimum height=.55cm, minimum size=6mm, thick, double, double distance=0.3mm, fill=bc-red-light, TeXture={\tiny{\textcolor{bc-red-dark}{red}}}, TeXture sep=2pt}}
\tikzset{gp2 blue root/.style={draw, rounded rectangle, minimum height=.55cm, minimum size=6mm, thick, double, double distance=0.3mm, fill=bc-blue-light, TeXture={\tiny{\textcolor{bc-blue-dark}{blue}}}, TeXture sep=2pt}}
\tikzset{gp2 green root/.style={draw, rounded rectangle, minimum height=.55cm, minimum size=6mm, thick, double, double distance=0.3mm, fill=bc-green-light, TeXture={\tiny{\textcolor{bc-green-dark}{green}}}, TeXture sep=2pt}}
\tikzset{gp2 grey root/.style={draw, rounded rectangle, minimum height=.55cm, minimum size=6mm, thick, double, double distance=0.3mm, fill=bc-grey-light, TeXture={\tiny{\textcolor{bc-grey-dark}{grey}}}, TeXture sep=2pt}}
\tikzset{gp2 any root/.style={draw, rounded rectangle, minimum height=.55cm, minimum size=6mm, thick, double, double distance=0.3mm, fill=bc-pink-light, TeXture={\tiny{\textcolor{bc-pink-dark}{any}}}, TeXture sep=2pt}}
\tikzset{every edge quotes/.style={font=\tiny}}
\tikzset{gp2 edge/.style={->,thick,line width=1.1pt}}
\tikzset{gp2 red edge/.style={->,bc-red-light,"red"{bc-red-dark},sloped,line width=1.1pt}}
\tikzset{gp2 blue edge/.style={->,bc-blue-light,"blue"{bc-blue-dark},sloped,line width=1.1pt}}
\tikzset{gp2 green edge/.style={->,bc-green-light,"green"{bc-green-dark},sloped,line width=1.1pt}}
\tikzset{gp2 dashed edge/.style={->,thick,dashed,line width=1.1pt}}
\tikzset{gp2 any edge/.style={->,bc-pink-light,"any"{bc-pink-dark},sloped,line width=1.1pt}}
\tikzset{gp2 bi edge/.style={-,thick,line width=1.1pt}}
\tikzset{gp2 red bi edge/.style={-,line width=5pt,bc-red-light,"red"{bc-red-dark},sloped,line width=1.1pt}}
\tikzset{gp2 blue bi edge/.style={-,line width=5pt,bc-blue-light,"blue"{bc-blue-dark},sloped,line width=1.1pt}}
\tikzset{gp2 green bi edge/.style={-,line width=5pt,bc-green-light,"green"{bc-green-dark},sloped,line width=1.1pt}}
\tikzset{gp2 dashed bi edge/.style={-,thick,dashed,line width=1.1pt}}
\tikzset{gp2 any bi edge/.style={-,line width=5pt,bc-pink-light,"any"{bc-pink-dark},sloped,line width=1.1pt}}
\tikzset{none/.style={}}
\pgfplotsset{compat=1.16}
\newcommand*\tikzTeXture[1]{
  \begingroup
    \setbox0=\vbox spread \pgfkeysvalueof{/tikz/TeXture y sep}{\vfil%
               \hbox spread \pgfkeysvalueof{/tikz/TeXture x sep}{\hfil#1\hfil}\vfil}%
    \def\fillareasize{\pgfpointdiff 
      {\pgfpointanchor{path picture bounding box}{south west}}%
      {\pgfpointanchor{path picture bounding box}{north east}}}%
    \pgfextractx{\dimen0}{\fillareasize}
    \pgfextracty{\dimen2}{\fillareasize}
    \leavevmode\hbox to \dimexpr\dimen0+\wd0{%
      \cleaders\vbox to \dimexpr\dimen2+\ht0+\dp0{
        \cleaders\box0\vfil
      }\hfil
    }%
  \endgroup
}
\tikzset{TeXture/.style={path picture={
  \node[anchor=center,text width=,text height=]
        at (path picture bounding box.center) {\tikzTeXture{#1}};}
}}
\tikzset{TeXture/.default=\TeX}
\tikzset{TeXture x sep/.initial=0pt}
\tikzset{TeXture y sep/.initial=0pt}
\tikzset{TeXture sep/.style={/tikz/TeXture x sep=#1,/tikz/TeXture y sep=#1}}
\title{A Small-Step Operational Semantics for \texorpdfstring{GP\,2}{GP 2}}
\author{Brian Courtehoute and Detlef Plump
\institute{Department of Computer Science, University of York, York, UK}
\email{\{bc956,detlef.plump\}@york.ac.uk}
}
\begin{document}
\maketitle

\begin{abstract}
The operational semantics of a programming language is said to be small-step if each transition step is an atomic computation step in the language. A semantics with this property faithfully corresponds to the implementation of the language. The previous semantics of the graph programming language GP\,2 is not fully small-step because the loop and branching commands are defined in big-step style. In this paper, we present a truly small-step operational semantics for GP\,2 which, in particular, accurately models diverging computations. To obtain small-step definitions of all commands, we equip the transition relation with a stack of host graphs and associated operations. We prove that the new semantics is non-blocking in that every computation either diverges or eventually produces a result graph or the failure state. We also show the finite nondeterminism property, viz. that each configuration has only a finite number of direct successors. The previous semantics of GP\,2 is neither non-blocking nor does it have the finite nondeterminism property. We also show that, for a program and a graph that terminate, both semantics are equivalent, and that the old semantics can be simulated with the new one.
\end{abstract}

\section{Introduction}
\label{sec:intro}

\newcommand{\Ha}{
\tikz{
    \node (a) at (0,0) [draw,circle,line width=1pt,inner sep=.5ex,fill=bc-grey-light] {};
    \node (b) at (.6,0) [draw,circle,line width=1pt,inner sep=.5ex] {};
}}
\newcommand{\Hb}{
\tikz{
    \node (a) at (0,0) [draw,circle,line width=1pt,inner sep=.5ex,fill=bc-grey-light] {};
}}
\newcommand{\Hc}{
\tikz{
    \node (a) at (0,0) [draw,circle,line width=1pt,inner sep=.5ex] {};
}}

GP\,2 is a nondeterministic programming language based on graph transformation rules. The previous semantics of GP\,2 is defined by both small-step and big-step inference rules \cite{Plump17a}. An operational semantics is \emph{small-step} if atomic computation steps in the language correspond to transition steps, meaning that the language can be implemented by translating the transition steps into corresponding code. In this paper, we present a truly small-step operational semantics for GP\,2 which, in particular, accurately models diverging computations.

While the previous semantics (Figure \ref{fig:semantics-old}) has small-step elements, the branching and loop constructs are not small-step. This can lead to the semantic transition sequence \emph{blocking} or \emph{getting stuck} \cite{NielsonNielson07}, i.e.\ reaching a configuration which is neither a graph nor the failure state, such that no inference rule is applicable.

To illustrate this situation, consider the program \ttt{P = try (r1!)\phantom{o}then skip else skip}, with the rule \ttt{r1} : \mbox{$_1$\hspace{-.2em}\Hb $\,\,\,\Rightarrow$ $_1$\hspace{-.2em}\Ha}, applied to the host graph \Hb. The statement \ttt{r1!}\! means that the rule \ttt{r1} is called until it is no longer applicable. The \ttt{try} statement attempts to evaluate \ttt{r1!}\! but will neither branch to the \ttt{then} nor the \ttt{else} part because  the loop \ttt{r1!}\! diverges on \Hb. In the previous semantics, \ttt{try} statements are handled with the following inference rules :

\medskip
\begin{tabular}{ll}
$\mathrm{[try_1']}$ $\frac{\displaystyle \tuple{C,\, G} \rightsquigarrow^+ H}{\displaystyle \tuple{\tryte{C}{P}{Q},\, G}\rightsquigarrow \tuple{P,\, H}}$
&
$\mathrm{[try_2']}$ $\frac{\displaystyle \tuple{C,\, G} \rightsquigarrow^+ \failrm}{\displaystyle \tuple{\tryte{C}{P}{Q},\, G} \rightsquigarrow \tuple{Q,\, G}}$
\end{tabular}

\medskip
The premises of these inference rules are that the conditional part $C$ of a \ttt{try} statement applied to host graph $G$ results in either a graph $H$ or failure, which determines whether $P$ or $Q$ is called. If $\tuple{C,G}$ diverges (does not terminate) however, neither rule applies. Since there are no other \ttt{try} rules, the transition sequence gets stuck. 

The new semantics we introduce in this paper handles \ttt{try} statements with the following rules:

\medskip
\begin{tabular}{ll}
\multicolumn{2}{l}{$\mathrm{[try_1]}$ $\tuple{\tryte{C}{P}{Q},\, S} \to \tuple{\text{TRY}(C,P,Q),\, \text{push}(\text{top}(S),\,S)}$}
\end{tabular}

\medskip
\begin{tabular}{ll}
$\mathrm{[try_2]}$ $\frac{\displaystyle \tuple{C,\, S} \to \tuple{C',\, S'}}{\displaystyle \tuple{\text{TRY}(C,P,Q),\, S}\to \tuple{\text{TRY}(C',P,Q),\, S'}}$
&
$\mathrm{[try_3]}$ $\frac{\displaystyle \tuple{C,\, S} \to S'}{\displaystyle \tuple{\text{TRY}(C,P,Q),\, S}\to \tuple{P,\, \text{pop2}(S')}}$
\end{tabular}

\medskip
\begin{tabular}{ll}
$\mathrm{[try_4]}$ $\frac{\displaystyle \tuple{C,\, S} \to \failrm}{\displaystyle \tuple{\text{TRY}(C,P,Q),\, S} \to \tuple{Q,\, \text{pop}(S)}}$
&
\end{tabular}

\medskip
Here $S$ and $S'$ are stacks of graphs. The rule $\mathrm{[try_1]}$ duplicates the top of the stack, and the TRY construct signals that the copy operation has happened. Repeated applications of the inference rule $\mathrm{[try_2]}$ model the evaluation of the condition in a small-step fashion. If the condition loops, $\mathrm{[try_2]}$ can be applied indefinitely, and we get an infinite transition sequence.

Intuitively, \texttt{P} should loop, which is what happens in the implementation of GP\,2. In the previous semantics however, \ttt{P} gets stuck because \ttt{r1!}\/ diverges, which means that we cannot apply either of the inference rules $\mathrm{[try_1']}$ or $\mathrm{[try_2']}$ to resolve the \ttt{try} statement.

The previous semantics tries to remedy this issue in the \emph{semantic function} which associates to a program \ttt{P} and host graph $G$ the set $[ \ttt{P} ] G$ of all possible outcomes of the execution of \ttt{P} on $G$. These outcomes can be a graph, the element fail, or $\bot$ which represents an infinite transition sequence. The previous semantic function uses $\bot$ as an outcome if the transition sequence gets stuck. However, there are problems with this approach.

Consider the program \ttt{P = try Loop then skip else skip}, where \ttt{Loop = \{r1,r2\}!}, \ttt{r1} is as previously defined, and \ttt{r2} : \mbox{\Hb $\,\,\,\Rightarrow$ $\emptyset$}. The command \ttt{\{r1,r2\}} is a \emph{rule set call}, meaning that rules \ttt{r1} and \ttt{r2} are selected nondeterministically. When \ttt{P} is executed on the host graph \Hb\,, an application of \ttt{r2} causes the loop to terminate since it removes the marked node which is necessary for either rule to be applicable. Hence \ttt{r1} may be applied a number of times, and then \ttt{r2} is applied once. But it should also be possible that \ttt{r2} is never called, resulting in a diverging computation. Hence the set of outcomes we want is $\{\bot,\, \emptyset,\, \Hc,\, \Hc\Hc,\, \Hc\Hc\Hc,\, \dots\}$. According to the previous semantics, however, the execution of \ttt{P} on \Hb\, cannot get stuck since \ttt{Loop} \emph{can}\/ always transition to a graph; and by the rules $\mathrm{[try_1']}$ and $\mathrm{[try_2']}$, the execution cannot diverge either. So $\bot \not\in [ \ttt{P} ] \Hb = \{\emptyset,\, \Hc,\, \Hc\Hc,\, \Hc\Hc\Hc,\, \dots\}$. Now, the new semantics indeed corresponds exactly to our intuition of the operational behaviour of GP\,2 programs. Moreover, we conjecture that the implementation is sound with respect to the new semantics, in that the behaviour of the implementation is covered by the new semantics.

This may also lead to two programs being semantically equivalent, even though they should not be. Programs $P$ and $P'$ are \emph{semantically equivalent} if $[ P ] = [ P' ]$, i.e. they have the same outcomes for all host graphs. Consider the program \ttt{P = try (\{r3,r2\}!)\phantom{o}then skip else skip}, where \ttt{r3} : \mbox{$_1$\hspace{-.2em}\Hb $\,\,\,\Rightarrow$ $_1$\hspace{-.2em}\Hb}. It can diverge but is semantically equivalent to \ttt{Q = try r2 then skip else skip} since the previous semantics cannot detect that divergence. For instance, $[ \ttt{P} ] \Hb = [ \ttt{Q}] \Hb = \{\emptyset\}$, but $[ \ttt{P} ] \Hb$ should include $\bot$.

The aforementioned issues can also happen with \ttt{if} statements, which work similarly to \ttt{try} statements, except that the changes the condition made to the host graph are reversed, even if the evaluation of the condition succeeds. Nested loops such as \ttt{Loop!} can get stuck as well since their inference rules also assume that the loop body either results in a graph or fails.

Diverging computations not being modelled properly entails an undesirable property, namely \emph{infinite nondeterminism}, i.e. there can be infinitely many configurations reachable in a single transition step. Consider the program \ttt{P = try Loop then skip else skip}, where \ttt{Loop = \{r1,r2\}!}, and the rules are as previously defined. We have $[ \ttt{Loop} ] \Hb = \{\emptyset,\, \Hc,\, \Hc\Hc,\, \Hc\Hc\Hc,\, \dots\, ,\, \bot\}$. In a transition sequence starting with $\tuple{\ttt{P},\Hb}$, since the \ttt{try} statement is resolved within a single step, it only takes one step to transition to either of the graphs in the set $\{\emptyset,\, \Hc,\, \Hc\Hc,\, \Hc\Hc\Hc,\, \dots\,\}$, of which there are infinitely many.

The semantics we introduce in this paper is truly small-step and as such, it accurately models looping computations with diverging transition sequences. When starting with a valid GP\,2 program, it cannot get stuck, which is a property we call \emph{non-blocking}. As a consequence of the small-step approach, we get \emph{finite nondeterminism}, meaning we can only reach a finite number of configurations within a single transition step.

In Section \ref{sec:gp2}, we give a brief overview of the rule-based graph programming language GP\,2 along with the previous semantics. We propose the new semantics in Section \ref{sec:semantics} and give examples of transition sequences. In Section \ref{sec:semantic-props} we prove several properties of the new semantics, including non-blocking as well as finite nondeterminism, and define the semantic function along with semantic equivalence. In Section \ref{sec:comparison}, we compare the new and previous semantics by showing the new semantic function is an extension of the previous one, and that they are equivalent excluding divergence.

\section{The Graph Programming Language \texorpdfstring{GP\,2}{GP 2}}
\label{sec:gp2}

This section provides a brief introduction to GP\,2 \cite{Plump12a}, a nondeterministic graph programming language based on transformation rules. We show the abstract syntax of GP\,2 programs below, and refer to \cite{Bak15a} for the full syntax. The language is implemented by a compiler generating C code \cite{Bak-Plump16a}.

GP\,2 programs transform input graphs into output graphs, where graphs are labelled and directed and may contain parallel edges and loops.

The principal programming construct in GP\,2 are conditional graph transformation rules labelled with expressions. For example, Figure \ref{fig:is-cyclic} shows a program recognising graphs that contain cycles and the declaration of its rules. The rule \ttt{delete} which has three formal parameters, a left-hand graph and a right-hand graph which are specified graphically, and a textual condition starting with the keyword \ttt{where}. The small numbers attached to nodes are identifiers, all other text in the graphs are labels.

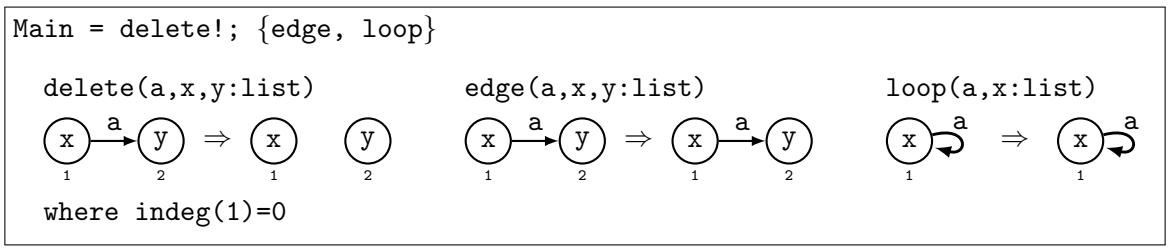
\begin{figure}[!ht]
\centering
\fbox{\begin{minipage}{.95\textwidth}
\begin{allintypewriter}
Main = delete!; \{edge, loop\}

\setlength{\tabcolsep}{0.4cm}

\medskip
\smallskip

\begin{tabular}{ p{4.8cm} p{4.8cm} p{2cm} }
	
	delete(a,x,y:list) & edge(a,x,y:list) & loop(a,x:list)\\
	
	\begin{tikzpicture}
        
		\node (a) at (0,0)       [gp2 node] {x};
		\node (b) at (1.25,0)    [gp2 node] {y};
		
		\node (d) at (2,0)   {$\Rightarrow$};
		
		\node (e) at (2.75,0)    [gp2 node] {x};
		\node (f) at (4,0)     [gp2 node] {y};
		
		\node (A) at (0,-.45)    {\tiny{1}};
		\node (B) at (1.25,-.45) {\tiny{2}};
		\node (E) at (2.75,-.45) {\tiny{1}};
		\node (F) at (4,-.45)  {\tiny{2}};
		
		\draw (a) edge[->,thick] node[above] {a} (b);
	\end{tikzpicture}
	&
	
	\begin{tikzpicture}
        
		\node (a) at (0,0)       [gp2 node] {x};
		\node (b) at (1.25,0)    [gp2 node] {y};
		
		\node (d) at (2,0)   {$\Rightarrow$};
		
		\node (e) at (2.75,0)    [gp2 node] {x};
		\node (f) at (4,0)     [gp2 node] {y};
		
		\node (A) at (0,-.45)    {\tiny{1}};
		\node (B) at (1.25,-.45) {\tiny{2}};
		\node (E) at (2.75,-.45) {\tiny{1}};
		\node (F) at (4,-.45)  {\tiny{2}};
		
		\draw (a) edge[->,thick] node[above] {a} (b)
		      (e) edge[->,thick] node[above] {a} (f);
	\end{tikzpicture}
	&
	\begin{tikzpicture}
        
		\node (a) at (0,0)       [gp2 node] {x};
		
		\node (d) at (1.4,0)   {$\Rightarrow$};
		
		\node (e) at (2.275,0)    [gp2 node] {x};
		
		\node (A) at (0,-.45)    {\tiny{1}};
		\node (E) at (2.275,-.45) {\tiny{1}};
		
		\draw (a) edge[in=-15,out=15,loop,gp2 edge] node[above] {a} (a)
		      (e) edge[in=-15,out=15,loop,gp2 edge] node[above] {a} (e);
	\end{tikzpicture}
	\\
	\vspace{-1em}where indeg(1)=0 & &
\end{tabular}

\end{allintypewriter}
\end{minipage}}
\caption{GP\,2 program recognising cyclic graphs}
\label{fig:is-cyclic}
\end{figure}

\bigskip
GP\,2 labels consist of an expression and an optional mark (explained below). Expressions are of type \texttt{int}, \texttt{char}, \texttt{string}, \texttt{atom} or \texttt{list}, where \texttt{atom} is the union of \texttt{int} and \texttt{string}, and \texttt{list} is the type of a (possibly empty) list of atoms. Lists of length one are equated with their entries and hence every expression can be considered as a list.

The concatenation of two lists $x$ and $y$ is written $x{:}y$, the empty list is denoted by \ttt{empty}. Character strings are enclosed in double quotes. Composite arithmetic expressions such as $\ttt{n * n}$ must not occur in the left-hand graph, and all variables occurring in the right-hand graph or the condition must also occur in the left-hand graph. 

Besides carrying list expressions, nodes and edges can be \emph{marked}. For example, one of the nodes in rule r1 in the introduction is marked by a grey shading. Marks are convenient to highlight items in input or output graphs, and to record visited items during a graph traversal. For instance, a graph can be checked for connectedness by propagating marks along edges as long as possible and subsequently testing whether any unmarked nodes remain. Note that conventional graph algorithms are often described by using marks as a visual aid \cite{CormenLeisersonRivestStein09}.

Additionally, nodes in rules and host graphs can be \emph{rooted}. If such a node appears in the left-hand side of a rule, it can only be matched with a root node in the host graph. Their use restricts matching to the neighbourhoods of root nodes, which can greatly increase efficiency \cite{Campbell-Courtehoute-Plump20a}.

We do not elaborate any further on features such as marks or roots because the GP\,2 semantics does not depend on them.

\bigskip
Rules operate on \emph{host graphs}\/ which are labelled with constant values (lists containing integer and string constants). Applying a rule $L \dder R$\/ to a host graph $G$\/ works roughly as follows: (1) Replace the variables in $L$ and $R$\/ with constant values and evaluate the expressions in $L$ and $R$, to obtain an instantiated rule $\hat{L} \dder \hat{R}$. (2) Choose a subgraph $S$\/ of $G$\/ isomorphic to $\hat{L}$ such that the dangling condition and the rule's application condition are satisfied (see below). (3) Replace $S$\/ with $\hat{R}$\/ as follows: numbered nodes stay in place (possibly relabelled), edges and unnumbered nodes of $\hat{L}$ are deleted, and edges and unnumbered nodes of $\hat{R}$ are inserted. 

In this construction, the \emph{dangling condition}\/ requires that nodes in $S$\/ corresponding to unnumbered nodes in $\hat{L}$ (which should be deleted) must not be incident with edges outside $S$. The rule's application condition is evaluated after variables have been replaced with the corresponding values of $\hat{L}$, and node identifiers of $L$\/ with the corresponding identifiers of $S$. For example, the term $\ttt{indeg(1)=0}$ in the condition of \ttt{delete} in Figure \ref{fig:is-cyclic} forbids the node $g(\ttt{1})$ to have incoming edges, where $g(\ttt{1})$ is the node in $S$ corresponding to \ttt{1}.

\begin{figure}

\begin{tabular}{lll}
Program         & $::=$ & Declaration \{ Declaration \} \\
Declaration     & $::=$ & MainDecl  \;$\vert$\;  ProcedureDecl  \;$\vert$\;  RuleDecl \\
MainDecl        & $::=$ & \ttt{Main} `$=$' CommandSeq \\
ProcedureDecl   & $::=$ & ProcedureID `$=$' [ `[' LocalDecl `]' ] CommandSeq \\
LocalDecl       & $::=$ & ( RuleDecl \;$\vert$\; ProcedureDecl ) \{ LocalDecl \} \\
CommandSeq      & $::=$ & Command \{`;'Command\} \\
Command         & $::=$ & Block \\
                &       & $\vert$\; \ttt{if} Block \ttt{then} Block [ \ttt{else} Block ]\\
                &       & $\vert$\; \ttt{try} Block [ \ttt{then} Block ] [ \ttt{else} Block ] \\
Block           & $::=$ & `(' CommandSeq `)' [`!'] \\
                &       & $\vert$\; SimpleCommand \\
                &       & $\vert$\; Block \ttt{or} Block \\
SimpleCommand   & $::=$ & RuleSetCall [`!'] \\
                &       & $\vert$\; ProcedureCall [`!'] \\
                &       & $\vert$\; \ttt{break} \\
                &       & $\vert$\; \ttt{skip} \\
                &       & $\vert$\; \ttt{fail} \\
RuleSetCall     & $::=$ & RuleID \;$\vert$\; `\{' [ RuleID \{ `,' RuleID \} ] `\}' \\
ProcedureCall   & $::=$ & ProcedureID
\end{tabular}

\caption{GP\,2 Program Syntax}
\label{fig:syntax-program}
\end{figure}

\begin{figure}[p]
\centering

\begin{subfigure}{\textwidth}
\begin{center}
\begin{tabular}{ll}
$\mathrm{[call_1']}$ $\frac{\displaystyle G \dder_R H}{\displaystyle\tuple{R,\,G} \rightsquigarrow H}$ 
&
$\mathrm{[call_2']}$ $\frac{\displaystyle G \not\dder_R}{\displaystyle\tuple{R,\,G} \rightsquigarrow \failrm}$
\\\\
$\mathrm{[seq_1']}$ $\frac{\displaystyle \tuple{P,\, G} \rightsquigarrow \tuple{P',\, H}}{\displaystyle \tuple{P;Q,\, G} \rightsquigarrow \tuple{P';Q,\, H}}$ 
&
$\mathrm{[seq_2']}$ $\frac{\displaystyle \tuple{P,\, G} \rightsquigarrow H}{\displaystyle \tuple{P;Q,\, G}\rightsquigarrow \tuple{Q,\, H}}$
\\\\
$\mathrm{[seq_3']}$ $\frac{\displaystyle \tuple{P,\, G} \rightsquigarrow \failrm}{\displaystyle \tuple{P;Q,\, G}\rightsquigarrow \failrm}$
\\\\
$\mathrm{[if_1']}$ $\frac{\displaystyle \tuple{C,\, G} \rightsquigarrow^+ H}{\displaystyle \tuple{\ifte{C}{P}{Q},\, G}\rightsquigarrow \tuple{P,\, G}}$
&
$\mathrm{[if_2']}$ $\frac{\displaystyle \tuple{C,\, G} \rightsquigarrow^+ \failrm}{\displaystyle \tuple{\ifte{C}{P}{Q},\, G} \rightsquigarrow \tuple{Q,\, G}}$
\\\\
$\mathrm{[try_1']}$ $\frac{\displaystyle \tuple{C,\, G} \rightsquigarrow^+ H}{\displaystyle \tuple{\tryte{C}{P}{Q},\, G}\rightsquigarrow \tuple{P,\, H}}$
&
$\mathrm{[try_2']}$ $\frac{\displaystyle \tuple{C,\, G} \rightsquigarrow^+ \failrm}{\displaystyle \tuple{\tryte{C}{P}{Q},\, G} \rightsquigarrow \tuple{Q,\, G}}$
\\\\
$\mathrm{[alap_1']}$ $\frac{\displaystyle \tuple{P,\, G} \rightsquigarrow^+ H}{\displaystyle \tuple{P!,\, G} \rightsquigarrow \tuple{P!,\, H}}$
&
$\mathrm{[alap_2']}$ $\frac{\displaystyle \tuple{P,\, G} \rightsquigarrow^+ \failrm}{\displaystyle \tuple{P!,\, G} \rightsquigarrow G}$
\\\\
$\mrm{[alap_3']}$ $\frac{\displaystyle \tuple{P,\, G} \rightsquigarrow^* \tuple{\mtt{break}, H}}{\displaystyle \tuple{P!,\, G} \rightsquigarrow H}$
&
$\mrm{[break']}$ $\tuple{\mtt{break}; P,\, G} \rightsquigarrow \tuple{\mtt{break},\, G}$
\end{tabular}
\end{center}
\vspace{-0.333333em}
\caption{Inference rules for core commands}
\label{fig:semantics-old-core}
\end{subfigure}

\vspace{3em}
\begin{subfigure}{\textwidth}
\begin{center}
\begin{tabular}{llll}
$\mrm{[or_1']}$ & $\tuple{P\mspace{.5mu} \mathop{\mtt{or}}\, Q,\, G} \rightsquigarrow \tuple{P,\, G}$ \hspace{4em} & $\mrm{[or_2']}$ & $\tuple{P\mspace{.5mu} \mathop{\mtt{or}}\, Q,\, G} \rightsquigarrow \tuple{Q,\, G}$
\\[1.5ex]
$\mrm{[skip']}$ & $\tuple{\skiptt,\, G} \rightsquigarrow G$ & $\mrm{[fail']}$ & $\tuple{\failtt,\, G} \rightsquigarrow \failrm$
\\[1.5ex]
$\mrm{[if_3']}$ & \multicolumn{3}{l}{$\tuple{\ift{C}{P},\, G} \rightsquigarrow \tuple{\ifte{C}{P}{\skiptt},\, G}$}
\\[1.5ex]
$\mrm{[try_3']}$ & \multicolumn{3}{l}{$\tuple{\tryt{C}{P},\, G} \rightsquigarrow \tuple{\tryte{C}{P}{\skiptt},\, G}$}
\\[1.5ex]
$\mathrm{[try_4']}$ & \multicolumn{3}{l}{$\tuple{\trye{C}{P},\, G} \rightsquigarrow \tuple{\tryte{C}{\skiptt}{P},\, G}$}
\\[1.5ex]
$\mathrm{[try_5']}$ & \multicolumn{3}{l}{$\tuple{\mtt{try}\ C,\, G} \rightsquigarrow \tuple{\tryte{C}{\skiptt}{\skiptt},\, G}$}
\end{tabular}
\end{center}
\vspace{-0.333333em}
\caption{Inference rules for derived commands}
\label{subfig:semantics-old-derived}
\end{subfigure}

\vspace{2em}
\caption{Previous GP\,2 Semantics}
\label{fig:semantics-old}
\end{figure}

\bigskip
Formally, GP\,2 is based on a form of attributed graph transformation according to the so-called double-pushout approach \cite{Hristakiev-Plump16a,Ehrig-Ehrig-Prange-Taentzer06a}. The grammar in Figure \ref{fig:syntax-program} gives the abstract syntax of GP\,2 programs. A program consists of declarations of conditional rules and procedures, and exactly one declaration of a main command sequence. The category RuleID refers to declarations of conditional rules in RuleDecl (whose syntax is omitted). Procedures must be non-recursive, they can be seen as macros with local declarations.

The call of a rule set $\{r_1,\dots,r_n\}$ nondeterministically applies one of the rules whose left-hand graph matches a subgraph of the host graph such that the dangling condition and the rule's application condition are satisfied. The call \emph{fails}\/ if none of the rules is applicable to the host graph. 

The command \ttt{if} $C$ \ttt{then} $P$ \ttt{else} $Q$ is executed on a host graph $G$ by first executing $C$ on $G$. If this results in a graph, $P$\/ is executed on the original graph $G$; otherwise, if $C$ fails, $Q$ is executed on $G$. The \ttt{try} command has a similar effect, except that $P$\/ is executed on the result of $C$'s execution in case $C$ succeeds. 

The loop command $P!$ executes the body $P$\/ repeatedly until it fails. When this is the case, $P!$ terminates with the graph on which the body was entered for the last time. The \ttt{break} command inside a loop terminates that loop with the current graph and transfers control to the command following the loop.

A program $P$ \ttt{or} $Q$ non-deterministically chooses to execute either $P$ or $Q$, which can be simulated by a rule-set call and the other commands \cite{Plump12a}. The commands \ttt{skip} and \texttt{fail} can also be expressed by the other commands: \ttt{skip} is equivalent to an application of the rule $\emptyset \dder \emptyset$ (where $\emptyset$ is the empty graph) and \ttt{fail} is equivalent to an application of $\{\}$ (the empty rule set).


\bigskip
Like Plotkin's structural operational semantics \cite{Plotkin04a}, the previous GP\,2 semantics is given by inference rules. The rules in Figure \ref{fig:semantics-old} define the transition relation $\rightsquigarrow$ over the following set:
$$(\text{ComSeq} \times \mathcal{G}) \,\times\, ((\text{ComSeq} \times \mathcal{G}) \,\cup\, \mathcal{G} \,\cup\, \{\text{fail}\})\text{.}$$

Here $\mathcal{G}$ is the set of all GP\,2 host graphs and ComSeq is the set of command sequences as defined in the syntax (Figure \ref{fig:syntax-program}), where we assume that procedure IDs have been eliminated by macro expansion. This means that procedure IDs have been replaced with their defining command sequence, and name clashes arising from local declarations have been resolved by renaming.

The element fail represents the program resulting in a failure state. The inference rules contain universally quantified variables, namely host graphs $G$ and $H$, command sequences in ComSeq $C$, $P$, $P'$, and $Q$, and rule set call $R$. The transitive closure of $\rightsquigarrow$ is denoted by $\rightsquigarrow^+$, and the reflexive transitive closure by $\rightsquigarrow^*$.

In general, the execution of a program on a host graph may result in another graph, fail, or diverge. Also, executions can get \emph{stuck} in that they reach a non-terminal configuration (neither a graph nor fail) to which no inference rule is applicable. Let $\G$ be the set of all host graphs and $\G^{\oplus} = \G \cup \{\bot,\mrm{fail}\}$. These outcomes are described by the semantic function $[\_] \,:\, \text{ComSeq} \to (\mathcal{G} \to 2^{\mathcal{G}^{\oplus}})$ which, for a command sequence $P$ and a host graph $G$, is defined as
$$[ P ] G = \{X \in \mathcal{G} \cup \{\text{fail}\} \,|\, \tuple{P,G} \rightsquigarrow^+ X\} \,\cup\, \{\bot \,|\, P \text{ can diverge or get stuck from } G\}\text{.}$$

By divergence we mean non-termination, that is the existence of an infinite transition sequence starting in $\tuple{P,G}$.

\section{The Small-Step Semantics}
\label{sec:semantics}

In this section, we introduce an improved semantics defined by inference rules, and give examples of transition sequences.

Due to additional constructs, the new semantics needs to distinguish between command sequences that are valid GP\,2 programs, and command sequences that are intermediary. The former are members of CommandSeq from the syntax in Figure \ref{fig:syntax-program}, and are called \emph{command sequences}. They have to satisfy the context conditions specified in Appendix A.6 of Bak's thesis \cite{Bak15a}. The following condition is particularly relevant to this paper: ``A \ttt{break} must be enclosed within a loop. If a \ttt{break} is in the condition of a branching statement\footnote{By branching statement, we mean an \texttt{if}, \texttt{try}, ITE, or TRY statement.}, the enclosing loop must be within the same condition.'' This constraint is not specific to graph programs: Java, C, and Python have similar restrictions on the use of \ttt{break} statements.

We define \emph{extended command sequences} (set ExtComSeq) to be command sequences with additional auxiliary constructs ITE and TRY. They do not follow context conditions since we may want a \ttt{break} outside of a loop in an intermediary transition step. The ITE and TRY statements serve to advance the command sequence in the condition in a small-step fashion, as well as to maintain the stack of host graphs. When we enter an ITE or TRY statement, the top of the stack (and current host graph) is duplicated in order to keep a backup. When exiting these statements we either pop the top, modified graph, or the second graph on the stack which is the unmodified backup copy depending on the outcome of the condition. The stack structure is needed because \ttt{if} and \ttt{try} statements may be nested. Whenever we enter an ITE or TRY construct, we push a graph, and whenever we exit one, we pop a graph. This ensures that the stack always contains enough graphs to pop and that the current host graph is always on top.

\begin{figure}[p]
\centering

\begin{subfigure}{\textwidth}
\begin{center}
\begin{tabular}{ll}
$\mathrm{[call_1]}$ $\frac{\displaystyle \text{top}(S) \dder_R G}{\displaystyle\tuple{R,\,S} \to \text{push}(G,\text{pop}(S))}$ 
&
$\mathrm{[call_2]}$ $\frac{\displaystyle \text{top}(S) \not\dder_R}{\displaystyle\tuple{R,\,S} \to \failrm}$
\\\\
$\mathrm{[seq_1]}$ $\frac{\displaystyle \tuple{P,\, S} \to \tuple{P',\, S'}}{\displaystyle \tuple{P;Q,\, S} \to \tuple{P';Q,\, S'}}$ 
&
$\mathrm{[seq_2]}$ $\frac{\displaystyle \tuple{P,\, S} \to S'}{\displaystyle \tuple{P;Q,\, S}\to \tuple{Q,\, S'}}$
\\\\
$\mathrm{[seq_3]}$ $\frac{\displaystyle \tuple{P,\, S} \to \failrm}{\displaystyle \tuple{P;Q,\, S}\to \failrm}$
&
$\mrm{[break]}$ $\tuple{\mtt{break}; P,\, S} \to \tuple{\mtt{break},\, S}$
\\\\
$\mathrm{[alap_1]}$ $\tuple{P!,\, S} \to \tuple{ \tryte{P}{P!}{\skiptt},\, S}$
&
$\mrm{[alap_2]}$ $\tuple{\text{TRY}(\mtt{break},\,P!,\,\skiptt),\, S} \to \text{pop2}(S)$
\\\\
\multicolumn{2}{l}{$\mathrm{[if_1]}$ $\tuple{\ifte{C}{P}{Q},\, S} \to \tuple{\text{ITE}(C,P,Q),\, \text{push}(\text{top}(S),\,S)}$}
\\\\
\multicolumn{2}{l}{$\mathrm{[try_1]}$ $\tuple{\tryte{C}{P}{Q},\, S} \to \tuple{\text{TRY}(C,P,Q),\, \text{push}(\text{top}(S),\,S)}$}
\\\\
$\mathrm{[if_2]}$ $\frac{\displaystyle \tuple{C,\, S} \to \tuple{C',\, S'}}{\displaystyle \tuple{\text{ITE}(C,P,Q),\, S}\to \tuple{\text{ITE}(C',P,Q),\, S'}}$
&
$\mathrm{[try_2]}$ $\frac{\displaystyle \tuple{C,\, S} \to \tuple{C',\, S'}}{\displaystyle \tuple{\text{TRY}(C,P,Q),\, S}\to \tuple{\text{TRY}(C',P,Q),\, S'}}$
\\\\
$\mathrm{[if_3]}$ $\frac{\displaystyle \tuple{C,\, S} \to S'}{\displaystyle \tuple{\text{ITE}(C,P,Q),\, S}\to \tuple{P,\, \text{pop}(S')}}$
&
$\mathrm{[try_3]}$ $\frac{\displaystyle \tuple{C,\, S} \to S'}{\displaystyle \tuple{\text{TRY}(C,P,Q),\, S}\to \tuple{P,\, \text{pop2}(S')}}$
\\\\
$\mathrm{[if_4]}$ $\frac{\displaystyle \tuple{C,\, S} \to \failrm}{\displaystyle \tuple{\text{ITE}(C,P,Q),\, S} \to \tuple{Q,\, \text{pop}(S)}}$
&
$\mathrm{[try_4]}$ $\frac{\displaystyle \tuple{C,\, S} \to \failrm}{\displaystyle \tuple{\text{TRY}(C,P,Q),\, S} \to \tuple{Q,\, \text{pop}(S)}}$
\end{tabular}
\end{center}
\vspace{-0.333333em}
\caption{Inference rules for core commands}
\label{fig:semantics-new-core}
\end{subfigure}

\vspace{3em}
\begin{subfigure}{\textwidth}
\begin{center}
\begin{tabular}{llll}
$\mrm{[or_1]}$ & $\tuple{P\mspace{.5mu} \mathop{\mtt{or}}\, Q,\, S} \to \tuple{P,\, S}$ \hspace{4em} & $\mrm{[or_2]}$ & $\tuple{P\mspace{.5mu} \mathop{\mtt{or}}\, Q,\, S} \to \tuple{Q,\, S}$
\\[1.5ex]
$\mrm{[skip]}$ & $\tuple{\skiptt,\, S} \to S$ & $\mrm{[fail]}$ & $\tuple{\failtt,\, S} \to \failrm$
\\[1.5ex]
$\mrm{[if_5]}$ & \multicolumn{3}{l}{$\tuple{\ift{C}{P},\, S} \to \tuple{\ifte{C}{P}{\skiptt},\, S}$}
\\[1.5ex]
$\mrm{[try_5]}$ & \multicolumn{3}{l}{$\tuple{\tryt{C}{P},\, S} \to \tuple{\tryte{C}{P}{\skiptt},\, S}$}
\\[1.5ex]
$\mathrm{[try_6]}$ & \multicolumn{3}{l}{$\tuple{\trye{C}{P},\, S} \to \tuple{\tryte{C}{\skiptt}{P},\, S}$}
\\[1.5ex]
$\mathrm{[try_7]}$ & \multicolumn{3}{l}{$\tuple{\mtt{try}\ C,\, S} \to \tuple{\tryte{C}{\skiptt}{\skiptt},\, S}$}
\end{tabular}
\end{center}
\vspace{-0.333333em}
\caption{Inference rules for derived commands}
\label{subfig:semantics-new-derived}
\end{subfigure}

\vspace{2em}
\caption{Improved GP\,2 Semantics}
\label{fig:semantics-new}
\end{figure}

The rules in Figure \ref{fig:semantics-new} inductively define a transition relation $\to$ over the following set:
$$(\text{ExtComSeq} \times \mathcal{S}) \,\times\, ((\text{ExtComSeq} \times \mathcal{S}) \,\cup\, \mathcal{S} \,\cup\, \{\text{fail}\})\text{,}$$
where $\mathcal{S}$ is the set of all stacks of GP\,2 host graphs (explained below). We call an element of the set $(\text{ExtComSeq} \times \mathcal{S}) \,\cup\, \mathcal{S} \,\cup\, \{\text{fail}\}$ an \emph{extended configuration}, whereas $(\text{CommandSeq} \times \mathcal{S}) \,\cup\, \mathcal{S} \,\cup\, \{\text{fail}\}$ is the set of \emph{configurations}. A configuration (or extended configuration) $C$ is \emph{terminal} if $C=\text{fail}$ or $C=S$ for some graph stack $S$.

The set $\mathcal{S}$ is the set of all non-empty stacks of GP\,2 host graphs where the top element is the current host graph, and where the other elements are backup copies to revert to or discard after the resolution of conditions of branching statements. Such a stack $S=[G_1,G_2,G_3,\dots,G_n]$ is a finite ordered list of GP\,2 host graphs with unary operations $\text{top}(S)=G_1$, $\text{pop}(S)=[G_2,G_3,\dots,G_n]$ and $\text{pop2}(S)=[G_1,G_3,\dots,G_n]$, as well as the binary operation $\text{push}(G,S)=[G,G_1,G_2,\dots,G_n]$, where $G$ is a GP\,2 host graph.

\bigskip
Most of the inference rules in Figure \ref{fig:semantics-new} have a horizontal bar. These rules consist of a \emph{premise} above the bar and a \emph{conclusion} below. The conclusion defines a transition step provided that the premise holds. A rule without a bar is called an \emph{axiom} and can be applied to a configuration without any precondition.

There are several universally quantified meta-variables within the inference rules. $P$, $P'$, $Q$, $Q'$, $C$, and $C'$ stand for extended command sequences in ExtComSeq, $S$ stands for a graph stack in $\mathcal{S}$, $G$ represents a host graph, and $R$ represents a rule set. We denote the transitive closure of $\to$ by $\to^+$, and the reflexive transitive closure by $\to^*$.

The inference rules inductively define the transition relation $\to$. The rules $\mathrm{[call_1]}$ and $\mathrm{[call_2]}$ are base cases. Their premises are GP\,2 derivations. Which of the two premises is satisfied depends on whether $\text{top}(S) \dder_R G$ or $\text{top}(S) \not\dder_R$, i.e. whether a rule in the rule set can be applied to the current host graph or not. The \ttt{if} and \ttt{try} statements are modelled by the $\mathrm{[if_i]}$ and $\mathrm{[try_i]}$ rules.

Sequential composition of commands is covered by $\mathrm{[seq_1]}$, $\mathrm{[seq_2]}$, and $\mathrm{[seq_3]}$, covering the cases of whether the first command called on a host graph results in a configuration, a graph stack, or fail.

Loops are semantically described as a try statement in $\mathrm{[alap_1]}$. Calling a command sequence as long as possible is modelled by trying to apply the command sequence, and if it succeeds, keep applying it as long as possible. Breaking from a loop is handled by $\mathrm{[break]}$, which makes sure commands following the break are discarded, and $\mathrm{[alap_2]}$, which terminates the loop if there is an isolated break in the TRY condition.

Figure \ref{fig:semantics-new-core} shows the inference rules for the core commands of GP\,2, while Figure \ref{subfig:semantics-new-derived} gives the inference rules for derived commands such as \ttt{or}, \ttt{skip}, and \ttt{fail}, as well as some \ttt{if} and \ttt{try} statements with omitted \ttt{then} and \ttt{else} clauses. These commands are referred to as \emph{derived} commands because they can be defined by the core commands (see \cite{Plump12a} for the case of the previous semantics).

\bigskip
Let us look at a couple of examples of transition sequences in Figure \ref{fig:transitions}, the first to illustrate loops, and the second to illustrate \ttt{if} and \ttt{try} statements. For each transition, we note the applied inference rule as a subscript. If the conclusion of $\mathrm{[rule_1]}$ is used as a premise for $\mathrm{[rule_2]}$, we denote it by $\frac{\mathrm{[rule_1]}}{\mathrm{[rule_2]}}$.

\newcommand{\Ga}{
\tikz{
    \node (a) at (0,0) [draw,circle,line width=1pt,inner sep=.5ex] {};
    \node (b) at (.6,0) [draw,circle,line width=1pt,inner sep=.5ex] {};
    \node (c) at (1.2,0) [draw,circle,line width=1pt,inner sep=.5ex] {};
    \draw (a) edge[->] (b);
    \draw (b) edge[->] (c);
}}
\newcommand{\Gb}{
\tikz{
    \node (a) at (0,0) [draw,circle,line width=1pt,inner sep=.5ex] {};
    \node (b) at (.6,0) [draw,circle,line width=1pt,inner sep=.5ex] {};
    \draw (a) edge[->] (b);
}}
\newcommand{\Gbb}{
\tikz{
    \node (a) at (0,0) [draw,circle,line width=1pt,inner sep=.5ex] {};
    \node (b) at (.6,0) [draw,circle,line width=1pt,inner sep=.5ex] {};
    \draw (b) edge[->] (a);
}}
\newcommand{\Gc}{
\tikz{
    \node (a) at (0,0) [draw,circle,line width=1pt,inner sep=.5ex] {};
}}
\newcommand{\Gd}{
\tikz{
    \node (a) at (0,0) [draw,circle,line width=1pt,inner sep=.5ex] {};
    \node (b) at (.6,0) [draw,circle,line width=1pt,inner sep=.5ex] {};
    \node (c) at (1.2,0) [draw,circle,line width=1pt,inner sep=.5ex] {};
    \draw (b) edge[->] (a);
    \draw (b) edge[->] (c);
}}

\begin{figure}[p]
\centering

\begin{subfigure}[b]{\textwidth}
\renewcommand{\arraystretch}{1.4}
\begin{tabular}{p{.01cm} c l}
    \multicolumn{3}{l}{$\tuple{\ttt{r!},\,[\Ga]}$} \\
    
    $\to$&$_{\scriptstyle\mathrm{[alap_1]}}$ &
    $\tuple{\ttt{\tryte{r}{r!}{skip}},\,[\,\Ga\,]}$ \\
    
    $\to$&$_{\scriptstyle\mathrm{[try_1]}}$ &
    $\tuple{\text{TRY (\ttt{r}, \ttt{r!}, \ttt{skip})},\,[\,\Ga,\,\Ga\,]}$ \\
    
    $\to$&$_{\frac{\scriptstyle\mathrm{[call_1]}}{\scriptstyle\mathrm{[try_3]}}}$ &
    $\tuple{\ttt{r!}, \,[\Gb]}$ \\
    
    $\to$&$_{\scriptstyle\mathrm{[alap_1]}}$ &
    $\tuple{\ttt{\tryte{r}{r!}{skip}},\,[\,\Gb\,]}$ \\
    
    $\to$&$_{\scriptstyle\mathrm{[try_1]}}$ &
    $\tuple{\text{TRY (\ttt{r}, \ttt{r!}, \ttt{skip})},\,[\,\Gb,\,\Gb\,]}$ \\
    
    $\to$&$_{\frac{\scriptstyle\mathrm{[call_1]}}{\scriptstyle\mathrm{[try_3]}}}$ &
    $\tuple{\ttt{r!}, \,[\Gc]}$ \\
    
    $\to$&$_{\scriptstyle\mathrm{[alap_1]}}$ &
    $\tuple{\ttt{\tryte{r}{r!}{skip}},\,[\,\Gc\,]}$ \\
    
    $\to$&$_{\scriptstyle\mathrm{[try_1]}}$ &
    $\tuple{\text{TRY (\ttt{r}, \ttt{r!}, \ttt{skip})},\,[\,\Gc,\,\Gc\,]}$ \\
    
    $\to$&$_{\frac{\scriptstyle\mathrm{[call_2]}}{\scriptstyle\mathrm{[try_4]}}}$ &
    $\tuple{\ttt{skip}, \,[\Gc]}$ \\
    
    $\to$&$_{\scriptstyle\mathrm{[skip]}}$ &
    $[\Gc]$
\end{tabular}
\renewcommand{\arraystretch}{1}
\caption{Transition sequence of program \ttt{P} applied to graph \Ga.}
\label{fig:transition1}
\end{subfigure}

\bigskip
\begin{subfigure}[b]{\textwidth}
\renewcommand{\arraystretch}{1.4}
\setlength{\tabcolsep}{3pt}
\begin{tabular}{p{.01cm} c l}
    \multicolumn{3}{l}{$\tuple{\ttt{try(if(r1;r1) then(r1;r1))},\,[\,\Gd\,]}$} \\
    
    $\to$&$_{\scriptstyle\mathrm{[try_7]}}$ &
    $\tuple{\ttt{\tryte{(if (r1;r1) then (r1;r1))}{\skiptt}{\skiptt}},\,[\,\Gd\,]}$ \\
    
    $\to$&$_{\scriptstyle\mathrm{[try_1]}}$ &
    $\tuple{\text{TRY}(\ttt{if (r1;r1) then (r1;r1)},\,\skiptt,\,\skiptt), \,[\,\Gd,\,\Gd\,]}$ \\
    
    $\to$&$_{\frac{\scriptstyle\mathrm{[if_5]}}{\scriptstyle\mathrm{[try_2]}}}$ &
    $\tuple{\text{TRY}(\ttt{if (r1;r1) then (r1;r1) else skip},\,\skiptt,\,\skiptt), \,[\,\Gd,\,\Gd\,]}$  \\
    
    $\to$&$_{\frac{\scriptstyle\mathrm{[if_1]}}{\scriptstyle\mathrm{[try_2]}}}$ &
    $\tuple{\text{TRY}(\text{ITE}(\ttt{r1;r1},\,\ttt{r1;r1},\,\skiptt),\,\skiptt,\,\skiptt), \,[\,\Gd,\,\Gd,\,\Gd\,]}$ \\
    
    $\to$&$_{\frac{\scriptstyle\frac{\scriptstyle\frac{\scriptstyle\mathrm{[call_1]}}{\scriptstyle\mathrm{[seq_2]}}}{\scriptstyle\mathrm{[if_2]}}}{\scriptstyle\mathrm{[try_2]}}}$ &
    $\tuple{\text{TRY}(\text{ITE}(\ttt{r1},\,\ttt{r1;r1},\,\skiptt),\,\skiptt,\,\skiptt), \,[\,\Gb,\,\Gd,\,\Gd\,]}$ \\
    
    $\to$&$_{\frac{\scriptstyle\frac{\scriptstyle\mathrm{[call_1]}}{\scriptstyle\mathrm{[if_3]}}}{\scriptstyle\mathrm{[try_2]}}}$ &
    $\tuple{\text{TRY}(\ttt{r1;r1},\,\skiptt,\,\skiptt), \,[\,\Gd,\,\Gd\,]}$ \\
    
    $\to$&$_{\frac{\scriptstyle\mathrm{[call_1]}}{\scriptstyle\mathrm{[try_2]}}}$ &
    $\tuple{\text{TRY}(\ttt{r1},\,\skiptt,\,\skiptt), \,[\,\Gb,\,\Gd\,]}$ \\
    
    $\to$&$_{\frac{\scriptstyle\mathrm{[call_1]}}{\scriptstyle\mathrm{[try_3]}}}$ &
    $\tuple{\skiptt, \,[\,\Gc\,]}$ \\
    
    $\to$&$_{\mathrm{[skip]}}$ &
    $[\,\Gc\,]$
\end{tabular}
\renewcommand{\arraystretch}{1}
\caption{Transition sequence of program \ttt{P'} applied to graph \Gd.}
\label{fig:transition2}
\end{subfigure}

\caption{Examples of transition sequences}
\label{fig:transitions}
\end{figure}

\begin{example}
Consider the program \ttt{P=r!} and the rule 
\ttt{r} : $_1$\hspace{-.2em}\Gb \,$\Rightarrow$ \hspace{-.2em}$_1$\hspace{-.2em}\Gc. Let us examine a transition sequence of \ttt{P} applied to the graph \Ga, as seen in Figure \ref{fig:transition1}.

We start by applying $\mathrm{[alap_1]}$ which turns the loop into a \ttt{try} statement. Unlike in the previous semantics, we model a loop by trying to apply its body, and if it is successful, we call the loop again.

The inference rule $\mathrm{[try_1]}$ transforms the \ttt{try} statement into the auxiliary TRY construct, which advances the program in a small-step fashion, unlike the previous semantics. There is a similar ITE construct which models \ttt{if} statements. The top of the graph stack is duplicated since the changes made by the condition of the \ttt{try} may be discarded.

We then apply \ttt{r} to the current host graph (top of the stack) so $\mathrm{[call_1]}$ can be applied. This serves as a premise for $\mathrm{[try_3]}$, which ends the TRY statement, pops the second element of the stack, and moves on to the \ttt{then} part which is the original loop.

We repeat this process until \ttt{r} is no longer applicable to the host graph. At this point, $\mathrm{[call_2]}$ serves as the premise for $\mathrm{[try_4]}$ which exits the TRY statement. This time, the condition results in fail, so we move on to the \ttt{else} part which is \ttt{skip} and the loop terminates.

\medskip
Now consider program \ttt{P'=try(if (r1;r1) then (r1;r1))} and the rule \ttt{r1} : \mbox{$_1$\hspace{-.2em}\Gb $\,\Rightarrow$ \hspace{-.2em}$_1$\hspace{-.2em}\Gc}. A transition sequence of \ttt{P'} applied to host graph \Gd\, can be found in Figure \ref{fig:transition2}.

Since the \ttt{try} statement does not have a \ttt{then} or \ttt{else} part, we first apply $\mathrm{[try_7]}$, which adds \ttt{skip} as both the \ttt{then} and \ttt{else} parts.

The inference rule $\mathrm{[try_1]}$ turns the \ttt{try} statement into the auxiliary TRY statement and duplicates the top of the stack. For most of the remaining transition sequence, we apply $\mathrm{[try_2]}$ under various premises to advance the condition.

Since the \ttt{if} has no \ttt{else} part, $\mathrm{[if_5]}$ completes it with a \ttt{skip}. The \ttt{if} statement is then turned into the auxiliary ITE statement, duplicating the top of the stack once again.

The rule \ttt{r1} is applied to the host graph which advances the concatenation with $\mathrm{[seq_2]}$, the ITE with $\mathrm{[if_2]}$, and the TRY with $\mathrm{[try_2]}$. 
Calling \ttt{r1} a second time resolves the ITE, and the top of the stack is popped since changes made by the conditions of \ttt{if} statements are reversed.

We keep applying the condition of the TRY, until we resolve it with $\mathrm{[try_3]}$. This time the second graph on the stack is popped since changes made by the condition of a \ttt{try} that did not result in fail are preserved.
\qed
\end{example}

\section{Properties of the Semantics}
\label{sec:semantic-props}

In this section, we show that the semantics is non-blocking, i.e. if a transition sequence ends in an extended configuration, we can always apply an inference rule (Proposition \ref{prop:non-blocking}). Note that we can only guarantee the non-blocking property for extended configurations that are part of a transition sequence originating in a valid GP\,2 program. We call those \emph{reachable} extended configurations. This is reasonable because there can be no other types of configurations in a transition sequence modelling a GP\,2 program.

Furthermore, we will describe the outcomes of a transition sequence starting with a valid GP\,2 program (Proposition \ref{prop:outcomes}), and show that we have finite nondeterminism (Proposition \ref{prop:finite-nondet}), i.e. there are only finitely many one-step transitions starting from any configuration, and what it means for the semantic function.

Let us first look at a lemma that guarantees we can make a transition step from extended configurations that do not contain a \ttt{break}, which is the first step towards showing the non-blocking property.

\medskip
\begin{lemma}[Progress from Extended Configurations]\label{lem:one-step}
Let $\tuple{P,S}$ be an extended configuration. Then one of the following applies:
\begin{itemize}
    \item $\tuple{P,S} \to \tuple{P',S'}$ for some extended configuration $\tuple{P',S'}$.
    \item $\tuple{P,S} \to S'$ for some graph stack $S' \in \mathcal{S}$.
    \item $\tuple{P,S} \to \failrm$.
    \item $P$ is not a command sequence and contains a $\ttt{break}$.
\end{itemize}
\end{lemma}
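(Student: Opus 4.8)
The plan is to prove the statement by structural induction on the extended command sequence $P$, establishing it for every stack $S$ simultaneously so that the induction hypothesis can be instantiated at an arbitrary stack. Each $P \in \text{ExtComSeq}$ is generated by the grammar of Figure~\ref{fig:syntax-program} augmented with the auxiliary constructs $\text{ITE}$ and $\text{TRY}$; treating a parenthesised block as its grouped contents, I would case-split on the outermost form of $P$ and, in each case, exhibit an applicable inference rule or land in the last alternative.

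The axiom cases need no induction hypothesis. For $P = \skiptt$, $P = \failtt$, a choice $P_1 \mathbin{\mtt{or}} P_2$, a loop $P = B!$, and every surface $\mtt{if}$ or $\mtt{try}$ statement, exactly one axiom fires — respectively $\mathrm{[skip]}$, $\mathrm{[fail]}$, $\mathrm{[or_1]}$, $\mathrm{[alap_1]}$, one of $\mathrm{[if_1]}$, $\mathrm{[if_5]}$, or one of $\mathrm{[try_1]}$, $\mathrm{[try_5]}$, $\mathrm{[try_6]}$, $\mathrm{[try_7]}$ — yielding a transition of the first or second kind. (Note that $\mathrm{[alap_1]}$ applies to a loop irrespective of its body, so loops always progress.) For a bare rule-set call $P = R$, the premises $\text{top}(S) \dder_R G$ of $\mathrm{[call_1]}$ and $\text{top}(S) \not\dder_R$ of $\mathrm{[call_2]}$ are exhaustive and mutually exclusive: $S$ is non-empty so $\text{top}(S)$ exists, and the rule set either matches it or not, so one of the two rules fires and $\tuple{P,S}$ steps to a stack or to $\failrm$.

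For the three inductive cases $P = P_1; P_2$, $P = \text{ITE}(C,P',Q')$ and $P = \text{TRY}(C,P',Q')$, I would apply the induction hypothesis to the subterm evaluated first ($P_1$, respectively $C$), after disposing of the $\ttt{break}$ shapes discussed below. If that subterm steps to a configuration, to a stack, or to $\failrm$, the matching congruence rule lifts the step to $P$: $\mathrm{[seq_1]}$, $\mathrm{[seq_2]}$, $\mathrm{[seq_3]}$ for sequencing, $\mathrm{[if_2]}$, $\mathrm{[if_3]}$, $\mathrm{[if_4]}$ for $\text{ITE}$, and $\mathrm{[try_2]}$, $\mathrm{[try_3]}$, $\mathrm{[try_4]}$ for $\text{TRY}$. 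If instead the hypothesis returns the last alternative for the subterm, then that subterm is not a command sequence and contains a $\ttt{break}$; since extending it to $P_1; P_2$ cannot enclose the offending $\ttt{break}$ in a loop, and since an $\text{ITE}$ or $\text{TRY}$ construct is never a member of CommandSeq, $P$ is likewise not a command sequence and still contains the $\ttt{break}$, so the last alternative holds for $P$.

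The delicate point, which I expect to be the main obstacle, is the treatment of $\ttt{break}$. A standalone $\ttt{break}$ admits no inference rule, but it violates the context condition that a $\ttt{break}$ be loop-enclosed, hence is not a command sequence and falls under the last alternative — this is precisely what forces that alternative into the statement. Crucially, the two positions in which a $\ttt{break}$ \emph{does} progress must be tested \emph{before} the induction hypothesis is consulted, since on the subterm $\ttt{break}$ alone the hypothesis would report the last alternative and mask the available step. Concretely, in the sequencing case I would first check whether $P_1 = \mtt{break}$ and, if so, apply $\mathrm{[break]}$; in the $\text{TRY}$ case I would first check for the shape $\text{TRY}(\mtt{break}, P_0!, \skiptt)$ and, if it matches, apply $\mathrm{[alap_2]}$. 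In every other position a $\ttt{break}$ blocks all rules and the propagation argument above places $\tuple{P,S}$ in the last alternative. Confirming that this ordering is exhaustive — that the two resolvable $\ttt{break}$-shapes together with the congruence cases cover everything except genuinely misplaced $\ttt{break}$s — is the crux of the proof.
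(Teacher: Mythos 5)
Your proof is correct and takes essentially the same route as the paper's: an exhaustive analysis of the outermost form of $P$ in which each form is discharged by its matching inference rule, with induction used to handle ITE/TRY constructs nested inside conditions, and with the key observation that a $\ttt{break}$ outside the two resolvable shapes (handled by $\mathrm{[break]}$ and $\mathrm{[alap_2]}$) lands in the fourth disjunct precisely because such a $P$ cannot be a command sequence. The only differences are organisational: you run one uniform structural induction where the paper uses a flat case split with a nested induction (on ITE/TRY nesting depth) only in the ITE/TRY case, and where the paper simply invokes the fourth disjunct for any ITE/TRY containing a $\ttt{break}$, you sharpen this by firing $\mathrm{[alap_2]}$ on the shape $\text{TRY}(\ttt{break},P_0!,\skiptt)$ --- both arguments are sound.
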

\begin{proof}
We shall prove this lemma by going through what $P$ could be according to the syntax and the semantics.

\newcounter{cnum}
\setcounter{cnum}{1}

\emph{Case \thecnum\stepcounter{cnum}: $P$ is a rule set call.}
Then either $\text{top}(S) \dder_P G$ or $\text{top}(S) \not\dder_P$. So either $\mathrm{[call_1]}$ or $\mathrm{[call_2]}$ can be applied.

\emph{Case \thecnum\stepcounter{cnum}: $P$ is a loop.}
If $P$ is a loop, $\mathrm{[alap_1]}$ can be applied.

\emph{Case \thecnum\stepcounter{cnum}: $P$ is $\failrm$, $\skiptt$ or an \emph{\ttt{or}} statement.}
Then $\mrm{[fail]}$, $\mrm{[skip]}$, or $\mrm{[or_1]}$ can be applied respectively.

\emph{Case \thecnum\stepcounter{cnum}: $P$ is of the form $\ifte{P_1}{P_2}{P_3}$ or $\tryte{P_1}{P_2}{P_3}$.}
Then $\mathrm{[if_1]}$ or $\mathrm{[try_1]}$ can be applied. If any then-clause or else-clause is omitted as specified by the syntax, $\mathrm{[if_5]}$, $\mathrm{[try_5]}$, $\mathrm{[try_6]}$, or $\mathrm{[try_7]}$ can be applied.

\emph{Case \thecnum\stepcounter{cnum}: $P$ is of the form $\text{ITE}(P_1,\,P_2,\,P_3)$ or $\text{TRY}(P_1,\,P_2,\,P_3)$.}
If $P$ contains a \ttt{break}, the fourth point of the lemma is satisfied, as containing ITE or TRY statements makes $P$ not a command sequence. So for the remainder of this case, assume $P$ does not contain a \ttt{break}. If $P_1$ is a sequential composition, let $P_1=P_1';\,P_1''$ where $P_1'$ is not a sequential composition. Otherwise let $P_1=P_1'$. We shall show the lemma's statement by induction on how many ITE or TRY statements are nested in $P_1'$ via the first sequential component of the condition.
\begin{itemize}
    \item For the base case, assume $P_1'$ is not an ITE or TRY statement. Then $P_1'$ is not a sequential composition and covered by cases $1$ to $4$ ($P_1'$ cannot be \ttt{break} since $P$ contains no \ttt{break}).
    \item Now for the induction step, assume that $P_1'$ is an ITE or TRY statement. Then $P_1'$ does derive either a configuration $\tuple{P_1''',S'}$, a graph stack $S'$ or $\failrm$ by the induction hypothesis. Hence one of $\mathrm{[if_2]}$, $\mathrm{[if_3]}$, $\mathrm{[if_4]}$, $\mathrm{[try_2]}$, $\mathrm{[try_3]}$, or $\mathrm{[try_4]}$ can be applied to $\tuple{P,S}$.
\end{itemize}

\emph{Case \thecnum\stepcounter{cnum}: $P$ is a sequential composition.}
Then we can decompose $P$ into $P=P_1;P_2$ where $P_1$ is not a sequential composition. We can apply $\mathrm{[seq_1]}$, $\mathrm{[seq_2]}$, or $\mathrm{[seq_3]}$ since $\tuple{P_1,S} \to \tuple{P_1',S'}$, $\tuple{P_1,S} \to S'$, or $\tuple{P_1,S} \to \failrm$ respectively by cases 1 to 5.

\emph{Case \thecnum\stepcounter{cnum}: $P$ contains a \ttt{break}.}

If $P$ is not a command sequence, the final point of the lemma is satisfied. Otherwise, $P$ satisfies context conditions, meaning it must be enclosed within a loop, so either case $2$ or one of the other previous cases is applicable.

\end{proof}

\medskip
Lemma \ref{lem:one-step} has a case where the extended command sequence contains a \ttt{break}. This is because for a transition sequence not to get stuck on a \ttt{break}, we need to start with a command sequence where the \ttt{break} is within a loop, which we cannot guarantee if we consider a single transition step like in Lemma \ref{lem:one-step}. In order to deal with this case, we prove that we can construct a transition sequence that leads to a state with no \ttt{break} in the following lemma. However, we need to restrict it to extended configurations reachable from a valid GP\,2 program. We say that an extended configuration $C$ is \emph{reachable} if there is a configuration $\tuple{P,[G]}$ such that $\tuple{P,[G]} \to^* C$. This will still allow us to work towards non-blocking, since we only care about transition sequences that start with valid GP\,2 programs.

\medskip
\begin{lemma}[Removing the \ttt{break} Statement]\label{lem:break}
Let $\tuple{P,S}$ be an extended configuration that is reachable and non-terminal. Suppose that $P$ contains \ttt{break}. Then one of the following applies.
\begin{itemize}
\item There is an extended configuration $\tuple{P',S'}$ containing no \ttt{break} statement such that $\tuple{P,S} \to^* \tuple{P',S'}$.
\item There is a graph stack $S'$ such that $\tuple{P,S} \to^+ S'$.
\item $\tuple{P,S} \to^+ \failrm$
\end{itemize}
\end{lemma}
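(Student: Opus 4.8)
The plan is to argue by induction, using the reachability hypothesis to turn the static context condition into a dynamic invariant. Since $\tuple{P,S}$ is reachable there is a command sequence $P_0$ and a graph $G$ with $\tuple{P_0,[G]} \to^* \tuple{P,S}$, where $P_0$ satisfies the context conditions. First I would establish the invariant that in every extended configuration on this sequence, each occurrence of \ttt{break} is enclosed in a loop construct --- a syntactic loop $Q!$ or a loop-derived $\text{TRY}(C,Q!,\skiptt)$ --- and that whenever a \ttt{break} occupies the condition slot of a branching construct (\ttt{if}, \ttt{try}, ITE, or TRY), the loop enclosing it lies inside that same condition. This is the dynamic counterpart of the quoted context condition. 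It holds for $\tuple{P_0,[G]}$ by the context condition, and I would check it is preserved by every rule; the only rules that move a \ttt{break} or alter loop structure are $\mathrm{[alap_1]}$ and $\mathrm{[try_1]}$ (which replace $Q!$ by a loop-derived TRY without exposing the \ttt{break}), $\mathrm{[break]}$ and $\mathrm{[seq_1]}/\mathrm{[seq_2]}$ (which only shorten the code preceding a \ttt{break}), and $\mathrm{[if_1]}/\mathrm{[try_1]}$ (which push a \ttt{break} deeper into, never out of, its guarding condition).

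Next, I would locate the leftmost \ttt{break} in $P$ and its innermost enclosing loop; if that loop is still written $Q!$, unwind it with $\mathrm{[alap_1]}$ and $\mathrm{[try_1]}$ so that the \ttt{break} lives inside the condition $C$ of a loop-derived $\text{TRY}(C,Q!,\skiptt)$. The heart of the proof is to drive transitions on $C$, lifted through $\mathrm{[try_2]}$ (and $\mathrm{[seq_1]}$), so as to bring the \ttt{break} to the front of $C$: $\mathrm{[seq_2]}$ and $\mathrm{[break]}$ consume the commands before it, and nested \ttt{if}/\ttt{try} statements standing in the way are resolved by the $\mathrm{[if_i]}$ and $\mathrm{[try_i]}$ rules. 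Along the way exactly one of three things occurs, matching the three conclusions: an inner rule-set call fails, so $\mathrm{[seq_3]}$ together with $\mathrm{[if_4]}$ or $\mathrm{[try_4]}$ selects an \ttt{else}-branch and yields either a \ttt{break}-free configuration or \failrm; an inner condition succeeds and its untaken branch (possibly carrying the \ttt{break}) is discarded, exposing a strictly smaller configuration handled by the induction hypothesis or already free of an exposed \ttt{break}; or the \ttt{break} reaches the front of its loop-derived TRY, giving $\text{TRY}(\mtt{break},Q!,\skiptt)$, whereupon $\mathrm{[alap_2]}$ fires and returns the graph stack $\text{pop2}(S')$, which is the second conclusion.

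The main obstacle is twofold. The genuinely stuck shapes are $\text{ITE}(\mtt{break},P_2,P_3)$ and $\text{TRY}(\mtt{break},P_2,P_3)$ whose then-part is not a loop, since no inference rule applies to them; the invariant is precisely what excludes them, because it forces any \ttt{break} exposed as a whole branching condition to sit inside a loop-derived $\text{TRY}(\mtt{break},Q!,\skiptt)$, which $\mathrm{[alap_2]}$ can consume. Establishing that the ``bring to the front'' process terminates is the other delicate point: I would use a measure such as the lexicographic pair consisting of the nesting depth of branching constructs guarding the leftmost \ttt{break} and the number of commands sequentially preceding it within its guarding condition, and verify that each driving step strictly decreases it or else already lands in a terminal or \ttt{break}-free configuration. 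Checking that resolving an inner branching never re-exposes a \ttt{break} higher up --- where the second clause of the context condition, carried by the invariant, does the real work --- is the crux of the argument.
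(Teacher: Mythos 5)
Your proof takes essentially the same route as the paper's: unwind the innermost enclosing loop with $\mathrm{[alap_1]}$ and $\mathrm{[try_1]}$ into a loop-derived $\text{TRY}(C,Q!,\skiptt)$, drive the \ttt{break} to the front of the condition by lifting premises through $\mathrm{[try_2]}$, and finish with $\mathrm{[alap_2]}$; reachable configurations violating the context conditions are handled by arguing that transitions never separate a \ttt{break} from the condition containing its loop. Your preserved-invariant formulation of this last point is cleaner than the paper's informal closing paragraph. Two repairable slips in it: the rules that can genuinely move a \ttt{break} into exposed position are the branch selectors $\mathrm{[or_1]}$, $\mathrm{[or_2]}$, $\mathrm{[if_3]}$, $\mathrm{[if_4]}$, $\mathrm{[try_3]}$, $\mathrm{[try_4]}$ (e.g.\ $\mathrm{[if_3]}$ promotes a \ttt{break} from the then-slot of an ITE into the surrounding condition slot), none of which appear in your preservation check; and the intermediate shape $\tryte{C}{Q!}{\skiptt}$, which exists after $\mathrm{[alap_1]}$ fires but before $\mathrm{[try_1]}$ does, matches neither of your two admissible loop shapes.

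The genuine gap is the termination of the ``bring the \ttt{break} to the front'' phase: no measure of the kind you propose can close it, because that phase need not terminate. Take the rule \ttt{r3} from the introduction (a grey node rewritten to itself) and the program $(\ttt{r3!};\ttt{break})!$, which satisfies the context conditions, run on the single-grey-node graph $G$. After $\mathrm{[alap_1]}$ and $\mathrm{[try_1]}$ we reach $\tuple{\text{TRY}((\ttt{r3!};\ttt{break}),\,(\ttt{r3!};\ttt{break})!,\,\skiptt),\,[G,G]}$, which is reachable, non-terminal and contains \ttt{break}. Its condition can only cycle, $\ttt{r3!};\ttt{break} \to (\tryte{\ttt{r3}}{\ttt{r3!}}{\skiptt});\ttt{break} \to \text{TRY}(\ttt{r3},\ttt{r3!},\skiptt);\ttt{break} \to \ttt{r3!};\ttt{break} \to \dots$, so the \ttt{break} is never exposed, $\mathrm{[alap_2]}$ never fires, no graph stack and no $\failrm$ is ever reached, and every configuration in the (unique) infinite transition sequence still contains \ttt{break}. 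Thus your lexicographic measure does not decrease along these steps and none of your three exit cases occurs. You should know that the paper's own proof has exactly the same hole --- it asserts that repeated use of Lemma 3.1 as a premise for $\mathrm{[try_2]}$ continues ``until we get $\text{TRY}(Q_3;Q_4,Q_1!,\skiptt)$ where $Q_3$ contains \ttt{break}'' without justifying that this point is ever reached --- so the defect really lies in the statement itself: the counterexample falsifies all three of its conclusions. What survives is the use the paper makes of the lemma, namely the non-blocking Proposition, since that only needs the existence of one further transition step, which divergence supplies; any correct repair should weaken the lemma's conclusion accordingly (e.g.\ to ``either a \ttt{break}-free or terminal configuration is reachable, or there is an infinite transition sequence from $\tuple{P,S}$'').
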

\begin{proof}
First assume that $\tuple{P,S}$ satisfies context conditions, i.e.\ the \ttt{break} is contained within a loop, and if the \ttt{break} is in the condition of an \ttt{if} or \ttt{try} statement, the enclosing loop must be in the same condition.

We will apply various inference rules to construct a transition sequence starting in $\tuple{P,S}$. Remember that whenever we apply such  an inference rule, it results in either a non-terminal extended configuration, a graph stack, or fail. If it results in a graph stack or fail, the second or third point of the lemma is satisfied. So at each step of the transition sequence we construct, we only need to consider the case where an inference rule results in a non-terminal extended command sequence.

If there are multiple loops with \ttt{break} statements, they are either in different sequential composition components, or nested. So let us show this lemma by induction on nesting and sequential composition.

As a base case, assume $P$ contains a single loop with a \ttt{break}, and want to show we can apply a sequence of inference rules that ultimately removes the \ttt{break}. So $P$ is of the form $Q_0 \ttt{;} Q_1 \ttt{!;} Q_2$, where $Q_1$ contains a \ttt{break}, and neither $Q_0$ nor $Q_2$ do. (What follows also applies if $P$ is of the form $Q_1 \ttt{!;} Q_2$, $Q_0 \ttt{;} Q_1 \ttt{!}$, or $Q_1 \ttt{!}$.) We can repeatedly apply Lemma \ref{lem:one-step} to transition to $Q_1 \ttt{!;} Q_2$. Then we apply $\mathrm{[alap_1]}$ followed by $\mathrm{[try_1]}$ to get $\text{TRY}(Q_1,Q_1!,\skiptt)$. We can then use Lemma \ref{lem:one-step} repeatedly as a premise for $\mathrm{[try_2]}$ until we get $\text{TRY}(Q_3\ttt{;}Q_4,Q_1!,\skiptt)$, where $Q_3$ contains \ttt{break} and is not a sequential composition. If $Q_3$ is a \texttt{break}, we can apply $\mathrm{[try_2]}$ under the premise of $\mathrm{[break]}$, followed by $\mathrm{[alap_2]}$ to get rid of the \texttt{break}. We know $Q_3$ cannot be a loop since we assumed $Q_1!$ is the enclosing loop of the \ttt{break}. So $Q_3$ is either an \ttt{or}, \ttt{if}, or \ttt{try} statement. If it is an \ttt{or} statement, we can apply $\mathrm{[or_1]}$ or $\mathrm{[or_2]}$ to either remove the \ttt{break} or lead to $\text{TRY}(\ttt{break}\ttt{;}Q_5,Q_1!,\skiptt)$. Similarly, if $Q_3$ is an \ttt{if} or \ttt{try} statement, the \ttt{break} must be in the \ttt{then} or \ttt{else} part due to context conditions, and we can use inference rules to either remove the \ttt{break} or lead to $\text{TRY}(\ttt{break}\ttt{;}Q_5,Q_1!,\skiptt)$. We can now apply $\mathrm{[try_2]}$ under the premise of $\mathrm{[break]}$ to get $\text{TRY}(\ttt{break},Q_1!,\skiptt)$. To this, we can apply $\mathrm{[alap_2]}$, which gets rid of the \ttt{break}.

For the induction step, let us first consider the case of nesting. Assume that $P$ is of the form $Q_0;(Q_1;Q_2;Q_3)!;$ $Q_4$, where $Q_2$ satisfies the lemma statement, and either $Q_1$ or $Q_2$ contain a single \ttt{break}. We can use the same arguments as in the base case in addition to $\mathrm{[seq_1]}$ under the premise of the induction hypothesis to get rid of the \ttt{break}.

Now consider sequential composition. As an induction step, assume that $P$ is of the form $Q_0;Q_1!;Q_2;$ $Q_3!;Q_4$, where one of $Q_1$ or $Q_3$ satisfies the lemma statement, and the other contains a single \ttt{break}. Again, we can use the arguments from the base case as well as the induction hypothesis in conjunction with $\mathrm{[seq_1]}$ to remove the \ttt{break}.

Finally, assume that $\tuple{P,S}$ does not satisfy context conditions, i.e.\ either there is a \texttt{break} without an enclosing loop, or there is a \texttt{break} in the condition of a branching statement whose enclosing loop is not within that condition. Since $\tuple{P,S}$ is reachable, the latter cannot be the case: transitions steps cannot separate a \texttt{break} from its enclosing loop in a way that they stop being within the same condition of a branching statement (loops can only be removed by inference rules, they cannot be ``moved''). So suppose there is a \ttt{break} without an enclosing loop. This must be because $\mathrm{[alap_1]}$ is applied earlier in the transition sequence, so it must be within the condition of a \ttt{try} or TRY. So we can use the same arguments as earlier in the proof, except that we need not argue that some of the inference rule, such as $\mathrm{[alap_1]}$ or $\mathrm{[try_1]}$ need to be applied.
\end{proof}

Now that we have Lemmata \ref{lem:one-step} and \ref{lem:break}, we can prove that the non-blocking property holds.

\medskip
\begin{proposition}[Non-Blocking Property]\label{prop:non-blocking}
Let $\tuple{P,S}$ be an extended configuration that is reachable and non-terminal. Then there is a transition step $\tuple{P,S} \to C$ for some extended configuration $C$.
\end{proposition}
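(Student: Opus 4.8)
The plan is to reduce the proposition directly to the two preceding lemmas by a single case distinction on whether the extended command sequence $P$ contains a \ttt{break} statement. Since the proposition only asserts the \emph{existence} of one outgoing transition step, in each case it will suffice to produce a nonempty transition sequence starting at $\tuple{P,S}$ and read off its first step.

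First I would treat the case where $P$ contains no \ttt{break}. Then the fourth alternative of Lemma~\ref{lem:one-step} (which requires $P$ to both fail to be a command sequence \emph{and} contain a \ttt{break}) cannot hold, so one of its first three alternatives must apply. Each of these is already a one-step transition, namely $\tuple{P,S} \to \tuple{P',S'}$, $\tuple{P,S} \to S'$, or $\tuple{P,S} \to \failrm$, and in every case the target is an extended configuration, which is exactly what the proposition demands. Note that this case needs neither reachability nor non-terminality beyond the pair form of $\tuple{P,S}$.

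The remaining case is where $P$ contains a \ttt{break}. Here I would invoke Lemma~\ref{lem:break}, whose hypotheses (reachable, non-terminal, and $P$ contains \ttt{break}) are precisely those available in this case. Its conclusion yields one of $\tuple{P,S} \to^* \tuple{P',S'}$ with $P'$ free of \ttt{break}, $\tuple{P,S} \to^+ S'$, or $\tuple{P,S} \to^+ \failrm$. The latter two are already nonempty; for the first, the transition sequence must be nonempty because $P$ contains a \ttt{break} while $P'$ does not, so $\tuple{P,S}\neq\tuple{P',S'}$. In all three cases we obtain a nonempty sequence, whose initial step is the desired single transition $\tuple{P,S} \to C$ for an extended configuration $C$.

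The only genuinely delicate point—and the step I would guard most carefully—is the bridge from the \emph{multi-step} conclusions of Lemma~\ref{lem:break} to the \emph{single-step} claim of the proposition. Concretely, I must verify that the first alternative of Lemma~\ref{lem:break} does not degenerate to zero steps, which is exactly where the asymmetry ``$P$ has a \ttt{break}, $P'$ does not'' is used. Once nonemptiness is secured, extracting the first step from a $\to^+$ sequence is immediate from the definition of the transitive closure, and everything else is a routine appeal to the two lemmas.
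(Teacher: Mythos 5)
Your proof is correct and takes exactly the same route as the paper's: a case split on whether $P$ contains a \ttt{break}, settling the first case by Lemma~\ref{lem:one-step} and the second by Lemma~\ref{lem:break}. In fact you spell out a detail the paper leaves implicit, namely that the $\to^*$ sequence in the first alternative of Lemma~\ref{lem:break} cannot be empty because $P$ contains a \ttt{break} while $P'$ does not.
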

\begin{proof}
If $P$ does not contain a \ttt{break}, this proposition follows from Lemma \ref{lem:one-step}. Otherwise, it follows from Lemma \ref{lem:break}.
\end{proof}

Let us now introduce a lemma that makes various statements about the size of host graph stacks in order to ensure that the inference rules are well-defined. Since we defined stacks to be nonempty, we want to make sure that if a transition sequence starts with a nonempty stack, it cannot lead to an empty stack, which the following lemma shows. Furthermore, when a transition sequence terminates in a graph stack, we want that stack to only contain one host graph.

For this lemma, we want to start from a valid GP\,2 program, not extended command sequences in general (since they may contain auxiliary constructs like ITE and TRY). So we consider configurations in $\text{CommandSeq} \times \mathcal{S}$. These follow the context conditions on where the \ttt{break} statement can appear as specified in \cite{Bak15a}.

\medskip
\begin{lemma}[Stack Size]\label{lem:stack}
Let $\tuple{P,[G]}$ be a configuration in $\text{CommandSeq} \times \mathcal{S}$.
\renewcommand{\labelenumi}{(\alph{enumi})}
\begin{enumerate}
    \item If $\tuple{P,[G]} \to^* \tuple{P',S}$, where $\tuple{P',S}$ is an extended configuration, then $|S| \geq 1$.
     \item If $\tuple{P,[G]} \to^+ S$, where $S$ is a graph stack, then $|S|=1$.
\end{enumerate}
\end{lemma}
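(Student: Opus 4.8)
The plan is to derive both statements from one quantitative invariant: writing $c(P')$ for the number of occurrences of the auxiliary constructs $\text{ITE}$ and $\text{TRY}$ in $P'$, I claim that every configuration $\tuple{P',S}$ with $\tuple{P,[G]} \to^* \tuple{P',S}$ satisfies $|S| = 1 + c(P')$. Part (a) is then immediate, since $c(P') \ge 0$. For part (b) I would use that the left-hand side of every inference rule is a configuration, so in $\tuple{P,[G]} \to^+ S$ every state before $S$ is a configuration and the last step has the form $\tuple{P'',S''} \to S$; the only rules whose conclusion is a bare stack are $\mathrm{[call_1]}$, $\mathrm{[skip]}$ and $\mathrm{[alap_2]}$. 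For $\mathrm{[call_1]}$ and $\mathrm{[skip]}$ we have $c(P'')=0$, hence $|S''|=1$ and the conclusion ($\text{push}(G,\text{pop}(S''))$ resp.\ $S''$) has height $1$; for $\mathrm{[alap_2]}$ we have $P''=\text{TRY}(\mtt{break},P!,\skiptt)$ with $c(P'')=1$, hence $|S''|=2$ and $\text{pop2}(S'')$ has height $1$.

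I would prove the invariant by induction on the length of the transition sequence, equivalently by showing that the quantity $|S| - c(P')$ is conserved by every configuration-to-configuration step and equals its initial value $|[G]| - c(P) = 1$ (using that $P \in \text{CommandSeq}$, whence $c(P)=0$). The \emph{entering} axioms $\mathrm{[if_1]}$ and $\mathrm{[try_1]}$ each raise $c$ by one and $\text{push}$ one graph; the remaining non-premise rules ($\mathrm{[or_1]}$, $\mathrm{[or_2]}$, $\mathrm{[alap_1]}$, $\mathrm{[break]}$, and the derived-command rewrites) change neither $c$ nor $|S|$. The \emph{exiting} rules $\mathrm{[if_3]}$, $\mathrm{[if_4]}$, $\mathrm{[try_3]}$, $\mathrm{[try_4]}$ each lower $c$ by one and remove one graph through a single $\text{pop}$ or $\text{pop2}$, so they too preserve $|S|-c$.

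The rules with a transition premise — $\mathrm{[seq_{1,2,3}]}$, $\mathrm{[if_{2,3,4}]}$, $\mathrm{[try_{2,3,4}]}$ — are the delicate ones, and handling them is where I expect the real work to lie. There the stack change is dictated by the premise step, so to apply the induction hypothesis I must know that the sub-configuration appearing in the premise again satisfies the invariant. This forces a structural strengthening of the hypothesis: in every reachable configuration the constructs $\text{ITE}$ and $\text{TRY}$ occur only along the \emph{active spine} (the leftmost factor of a sequential composition, and recursively the condition of an enclosing $\text{ITE}$/$\text{TRY}$), while every \emph{passive} fragment — the then- and else-parts of an $\text{ITE}$/$\text{TRY}$, the right factor $Q$ of a composition $P_1;Q$, and every loop body — is an ordinary command sequence and hence contains no auxiliary construct. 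I would verify that this property is preserved by each rule (entering rules copy an auxiliary-free condition; exiting rules pass control to an auxiliary-free then/else-part; $\mathrm{[alap_1]}$ only reinstates the unchanged loop body), and that no rule ever writes an $\text{ITE}$ or $\text{TRY}$ into a passive position. With this freshness available the premise rules balance exactly: in $\mathrm{[seq_2]}$ the premise $\tuple{P_1,S}\to S'$ inherits $|S| = 1 + c(P_1)$ precisely because $c(Q)=0$; in $\mathrm{[if_3]}$/$\mathrm{[try_3]}$ the retrieved then-part has $c=0$, matching the single $\text{pop}$/$\text{pop2}$; and in $\mathrm{[if_4]}$/$\mathrm{[try_4]}$ the condition can only fail ``shallowly'' (an $\text{ITE}$ or $\text{TRY}$ never steps to $\failrm$, only ever to a configuration), so again $c=0$ on both sides. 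This structural bookkeeping, rather than the arithmetic of stack heights, is the main obstacle.
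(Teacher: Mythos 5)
Your proposal is correct and takes essentially the same route as the paper: the paper's proof rests on exactly your invariant $|S| = \#(P')+1$ (where $\#$ counts ITE/TRY occurrences), justified by observing that pushes happen precisely in $\mathrm{[if_1]}$/$\mathrm{[try_1]}$, which introduce an auxiliary construct, and pops precisely in the rules that eliminate one, and it proves (b) by the same case analysis on the final step ($\mathrm{[skip]}$, $\mathrm{[call_1]}$, or $\mathrm{[alap_2]}$) with the same count $\#(P')=1$, $|S'|=2$ in the $\mathrm{[alap_2]}$ case. Your ``active spine'' strengthening is the precise form of the paper's remark that ITE/TRY can only be nested via the first argument (since only $\mathrm{[if_2]}$/$\mathrm{[try_2]}$ rewrite inside an auxiliary construct), which the paper invokes explicitly only for $\mathrm{[alap_2]}$ but which, as you note, also underlies the bookkeeping for the premise rules.
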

\begin{proof}
The statement in (a), is satisfied for zero transition steps. So let us examine the inference rules that contain push, pop, and pop2. The rule $\mathrm{[call_1]}$ contains both push and pop, but preserves the size of the stack. The rules that push a graph onto the stack are $\mathrm{[if_1]}$ and $\mathrm{[try_1]}$ which are exactly the rules that introduce an ITE or a TRY. The rules that pop a graph from the stack are $\mathrm{[alap_2]}$, $\mathrm{[if_3]}$, $\mathrm{[if_4]}$, $\mathrm{[try_3]}$, and $\mathrm{[try_4]}$. These are exactly rules that remove an ITE or TRY from the extended command sequence. Since $\tuple{P,[G]}$ contains no ITE or TRY statements and only one host graph, we have $|S|=\#(P')+1$, where $\#$ counts the combined number of ITE and TRY statements in an extended command sequence. Since $|S|=\#(P')+1$, we have $|S| \geq 1$.

Now in case (b), we can break down the transition sequence into $\tuple{P,[G]} \to^* \tuple{P',S'} \to S$. Like in the proof of (a), the formula $|S'|=\#(P')+1$ applies. Let us examine which inference rules can be applied in the final step of the transition. It can only be either $\mathrm{[skip]}$, $\mathrm{[call_1]}$, or $\mathrm{[alap_2]}$. To apply $\mathrm{[skip]}$, $P'$ must be \ttt{skip} and $\#(\ttt{skip})=0$, so $|S|=|S'|=1$. To apply $\mathrm{[call_1]}$, $P'$ must be rule set call, and hence cannot contain ITE or TRY, so $|S|=|S'|=1$. To apply $\mathrm{[alap_2]}$, $P'$ must be of the form $\text{TRY}(\ttt{break},P''!,\ttt{skip})$, where $P''$ is an extended command sequence. We know $P''$ cannot contain an ITE or TRY statement because they can only be nested in their first argument. Indeed, if an extended command sequence already starts with an ITE or TRY, no inference rule allows for said ITE or TRY statement to be nested within another one. So the only way to nest statements is via the rule $\mathrm{[try_2]}$, which modifies the first argument. But the first argument of $P'$ is \ttt{break}, which contains no ITE or TRY statements. So $\#(P')=1$ and $|S'|=2$. Since we apply $\mathrm{[alap_2]}$, we have $S=\text{pop2}(S')$, so $|S|=|S'|-1=1$.
\end{proof}

We also want to make sure that if we call pop2 on a stack to pop its second element, the stack does indeed contain at least two elements. More precisely, under the premise of Lemma \ref{lem:stack}, if $\tuple{P,[G]} \to^+ \tuple{P',\text{pop2}(S)}$ (an extended configuration) or $\tuple{P,[G]} \to^+ \text{pop2}(S)$ (a graph stack), then $|S| \geq 2$. This follows directly from Lemma \ref{lem:stack} since $|\text{pop2}(S)|=|S|-1$.

\medskip
Let us now use Lemmata \ref{lem:one-step}, \ref{lem:break}, and \ref{lem:stack} is to describe what the possible outcomes of a transition sequence starting in a valid GP\,2 program are.

\medskip
\begin{theorem}[Outcomes of Transition Sequences]\label{prop:outcomes}
Let $\tuple{P,[G]}$ be a configuration. Then one of the following applies:
\begin{itemize}
    \item There is an infinite transition sequence $\tuple{P,[G]} \to \tuple{P_1,S_1} \to \tuple{P_2,S_2} \to \dots$ where $\tuple{P_i,S_i}$ is an extended configuration for all $i \geq 1$.
    \item $\tuple{P,[G]} \to^+ [G']$ for some host graph $G'$.
    \item $\tuple{P,[G]} \to^+ \text{fail}$.
\end{itemize}
\end{theorem}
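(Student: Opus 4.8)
The plan is to build a single \emph{maximal} transition sequence starting at $\tuple{P,[G]}$ and then classify it according to whether it is infinite or finite. The three bullets correspond exactly to the possible fates of such a sequence, so a case split on its length does all the work; the two preceding results, Proposition~\ref{prop:non-blocking} and Lemma~\ref{lem:stack}, supply everything needed to pin down the finite case, meaning no fresh structural analysis of $P$ is required.

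First I would construct the sequence greedily. Set $C_0 = \tuple{P,[G]}$; it is reachable in zero steps and, being a pair, is non-terminal. Inductively, whenever the current $C_i$ is a non-terminal extended configuration $\tuple{P_i,S_i}$, it is reachable (since $C_0 \to^* C_i$) and non-terminal, so Proposition~\ref{prop:non-blocking} yields a successor $C_{i+1}$ with $C_i \to C_{i+1}$, which I append; I stop only when $C_i$ is terminal, i.e.\ equal to a graph stack or to $\failrm$. This process yields a maximal sequence: it either runs forever or halts at a terminal configuration.

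Then I would split into the two cases. If the sequence is infinite, every $C_i$ must be of the form $\tuple{P_i,S_i}$, because a terminal configuration (a bare stack or $\failrm$) never appears as the left-hand side of a conclusion of any rule in Figure~\ref{fig:semantics-new} and hence has no successor; this gives the first bullet. If the sequence is finite, ending at some $C_n$, then $C_n$ has no successor, so by the contrapositive of Proposition~\ref{prop:non-blocking} the reachable configuration $C_n$ cannot be non-terminal, whence it is terminal. Since $C_0$ is a non-terminal pair, $n \geq 1$, so the sequence is $\tuple{P,[G]} \to^+ C_n$. If $C_n = \failrm$ we obtain the third bullet. If $C_n$ is a graph stack $S$, then Lemma~\ref{lem:stack}(b) applies to $\tuple{P,[G]} \to^+ S$, giving $|S| = 1$, so $S = [G']$ for a host graph $G'$, which is the second bullet.

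The substance is genuinely carried by the earlier results, so the main point to get right is the bookkeeping that keeps Proposition~\ref{prop:non-blocking} applicable at every step, namely that each $C_i$ stays reachable and that halting can occur only at a terminal configuration. The one auxiliary fact I would state explicitly is that terminal configurations have no outgoing transitions: every inference rule has a conclusion whose left-hand side is a pair $\tuple{\cdot,\cdot}$, so neither a bare graph stack nor $\failrm$ can transition. Given this, the classification is forced, and the only genuinely nontrivial deduction, that a finite run ending in a stack ends in a \emph{singleton} stack, is handed to us by Lemma~\ref{lem:stack}(b).
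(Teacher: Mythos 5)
Your proof is correct and takes essentially the same route as the paper's: both construct a maximal transition sequence using the progress results and invoke Lemma \ref{lem:stack}(b) to conclude that a finite run ending in a stack ends in a singleton $[G']$. The only difference is packaging — the paper applies Lemma \ref{lem:break} once to eliminate \texttt{break} statements and then iterates Lemma \ref{lem:one-step}, whereas you iterate Proposition \ref{prop:non-blocking} (their corollary) while tracking reachability at each step, which is if anything slightly tidier, since it sidesteps the paper's implicit assumption that break-freeness is preserved along the constructed sequence.
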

\begin{proof}
Lemma \ref{lem:stack} guarantees that if a transition sequence starts in $\tuple{P,[G]}$ and ends in a stack, that stack only contains one graph. So for this proposition, it is enough to show that transition sequences end in a stack in the relevant cases.

In order to get rid of a potential \ttt{break} statement in $P$, we can apply Lemma \ref{lem:break} to $\tuple{P,[G]}$. If we get a graph stack or fail, we fulfil the second or third case of this proposition. Otherwise we get an extended configuration $\tuple{P',S}$ that contains no \ttt{break}.

Since there is now no \ttt{break} in either $\tuple{P,[G]}$ or $\tuple{P',S}$, we can apply the first, second, and third cases of Lemma \ref{lem:one-step} either indefinitely to get an infinite transition sequence, or until we get a graph stack or fail.
\end{proof}

Now that we know the possible outcomes of a transition sequence, we can define the \emph{semantic function} $\llbracket\_\rrbracket \,:\, \text{CommandSeq} \to (\mathcal{G} \to 2^{\mathcal{G}^{\oplus}})$, where $\mathcal{G}$ is the set of host graphs, $[\mathcal{G}]$ the set of stacks consisting of exactly one host graph (which we can identify with single host graphs), and $\mathcal{G}^{\oplus}=[\mathcal{G}] \,\cup\, \{\text{fail},\bot\}$. The symbol $\bot$ is used to represent an infinite transition sequence, i.e. divergence. The function is defined as
$$\llbracket P \rrbracket G = \{X \in [\mathcal{G}] \cup \{\text{fail}\} \,|\, \tuple{P,G} \to^+ X\} \,\cup\, \{\bot \,|\, P \text{ can diverge from } G\}\text{.}$$

    This functions differs from the one presented in \cite{Plump17a} and Section \ref{sec:gp2} since $\bot$ is only used when $P$ diverges, because we know by Proposition \ref{prop:non-blocking} that $P$ cannot get stuck. We will show in Lemma \ref{lem:infinite-new} that every infinite or stuck transition sequence in the previous semantics corresponds to an infinite transition sequence in the new semantics.

\bigskip
Let us now examine the property of \emph{finite nondeterminism} as specified by Apt in Section 4.1 of \cite{Apt84}, i.e. the set of elements reachable from a configuration in one transition step is finite. A related concept is \emph{bounded nondeterminism}, where the cardinality of the aforementioned set is finite and depends on the program only (and not on the size of the current state). An example for a language with bounded nondeterminsim is Dijkstra's language of guarded commands \cite{Reynolds98a}. Many references \cite{Dijkstra97,FokkinkVu03,Reynolds98a,SondergaardSestoft92} equate the concepts of finite and bounded nondeterminism and call it ``bounded nondeterminism''. However, rule-based languages generally have unbounded nondeterminism because they rely on nondeterministic rule matching. This also applies to GP\,2 which the following example illustrates.

\begin{example}
Consider the rule 
\ttt{r} : $_1$\hspace{-.2em}\Gbb \,$\Rightarrow$ \hspace{-.2em}$_1$\hspace{-.2em}\Gc\, and the \emph{comb graph} $G_4$ as shown in Figure \ref{fig:comb-graph}.
\begin{figure}[h]

\centering

\begin{tikzpicture}
    \node (a) at (0,0) [draw,circle,line width=1pt,inner sep=.7ex] {};
    \node (b) at (1,0) [draw,circle,line width=1pt,inner sep=.7ex] {};
    \node (c) at (2,0) [draw,circle,line width=1pt,inner sep=.7ex] {};
    \node (d) at (3,0) [draw,circle,line width=1pt,inner sep=.7ex] {};
    
    \node (e) at (0,1) [draw,circle,line width=1pt,inner sep=.7ex] {};
    \node (f) at (1,1) [draw,circle,line width=1pt,inner sep=.7ex] {};
    \node (g) at (2,1) [draw,circle,line width=1pt,inner sep=.7ex] {};
    \node (h) at (3,1) [draw,circle,line width=1pt,inner sep=.7ex] {};
    
    \draw (a) edge[->] (b);
    \draw (b) edge[->] (c);
    \draw (c) edge[->] (d);
    
    \draw (e) edge[->] (a);
    \draw (f) edge[->] (b);
    \draw (g) edge[->] (c);
    \draw (h) edge[->] (d);
\end{tikzpicture}

\caption{The comb graph $G_4$}
\label{fig:comb-graph}
\end{figure}
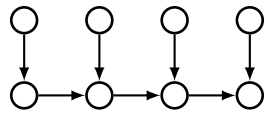
There are four possible matches for the left-hand side of rule \ttt{r} in graph $G_4$, so applying the rule can result in four different non-isomorphic graphs, which is a finite amount. When applying \ttt{r} to comb graph $G_k$, we get $k$ non-isomorphic graphs, which depends on the size of the host graph and hence is not bounded.
\qed
\end{example}

We now show that GP\,2 does have finite nondeterminism.

\medskip
\begin{proposition}[Finite Nondeterminism]\label{prop:finite-nondet}
Let $\gamma \in \text{ExtComSeq} \times \mathcal{S}$ be an extended configuration, and $T_{\gamma} = \{\gamma' \,|\, \gamma \to \gamma' \in \text{ExtCommSeq} \times \mathcal{S} \}$. Then $|T_{\gamma}|$ is finite.
\end{proposition}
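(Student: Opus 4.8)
The plan is to prove the slightly stronger statement that the set of \emph{all} one-step successors of $\gamma$, i.e.\ those lying in $(\text{ExtComSeq} \times \mathcal{S}) \,\cup\, \mathcal{S} \,\cup\, \{\failrm\}$, is finite; since $T_\gamma$ is a subset of this set, the finiteness of $T_\gamma$ follows immediately. I would argue by structural induction on the extended command sequence $P$ in $\gamma = \tuple{P,S}$.

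The base ingredient is that rule-set application is finitely branching: for a fixed rule set $R$ and a fixed finite host graph $H = \text{top}(S)$, the set $\{G \mid H \dder_R G\}$ is finite, because $R$ consists of finitely many rules and each rule admits only finitely many matches into the finite graph $H$, every match determining the outcome of the double-pushout construction. Hence for $P=R$ the only applicable inference rules $\mathrm{[call_1]}$ and $\mathrm{[call_2]}$ yield the finite successor set $\{\text{push}(G,\text{pop}(S)) \mid H \dder_R G\} \cup \{\failrm\}$. For the remaining atomic and axiom cases---$\skiptt$, $\failtt$, a bare $\mtt{break}$, a loop $Q!$, the full and partial $\mtt{if}$/$\mtt{try}$ forms, and $\mtt{or}$ statements---exactly one inference-rule schema is applicable and it is an axiom, so there are at most two successors (two for $\mtt{or}$ via $\mathrm{[or_1]},\mathrm{[or_2]}$, one otherwise, and none for a bare $\mtt{break}$).

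The inductive cases carry the real content. For $P = \text{ITE}(C,P_1,P_2)$ the applicable rules $\mathrm{[if_2]}$, $\mathrm{[if_3]}$, $\mathrm{[if_4]}$ each fire in accordance with a one-step transition of the strictly smaller sub-configuration $\tuple{C,S}$: each configuration-successor $\tuple{C',S'}$ of $\tuple{C,S}$ gives rise to at most one successor of $\gamma$ via $\mathrm{[if_2]}$, each graph-stack-successor $S'$ gives at most one via $\mathrm{[if_3]}$, and a $\failrm$-successor gives at most one via $\mathrm{[if_4]}$. Thus the number of successors of $\gamma$ is bounded by the number of successors of $\tuple{C,S}$, which is finite by the induction hypothesis. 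The case $P = \text{TRY}(C,P_1,P_2)$ is identical using $\mathrm{[try_2]}$, $\mathrm{[try_3]}$, $\mathrm{[try_4]}$, with the special pattern $\text{TRY}(\mtt{break},P!,\skiptt)$ handled by the lone axiom $\mathrm{[alap_2]}$. Sequential composition $P = P_1;P_2$ (with $P_1$ taken to be the leftmost command, hence a proper subterm) is treated the same way: if $P_1 = \mtt{break}$ the only step is the axiom $\mathrm{[break]}$, and otherwise the successors arise through $\mathrm{[seq_1]}$, $\mathrm{[seq_2]}$, $\mathrm{[seq_3]}$, each determined by a successor of the smaller $\tuple{P_1,S}$, which is finite by the induction hypothesis.

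I expect the main obstacle to be the base finiteness fact for rule-set application---making precise that a finite host graph admits only finitely many matches for a finite rule set---together with the bookkeeping in the ITE/TRY and sequential cases, where one must check that every successor of $\gamma$ is induced by a successor of the relevant sub-configuration, so that finitely many sub-successors yield finitely many successors of $\gamma$ overall. The $\mtt{break}$-related corner cases (a bare $\mtt{break}$, and the interplay of $\mathrm{[break]}$, $\mathrm{[seq_1]}$, and $\mathrm{[alap_2]}$) require a little care to confirm that no further rule fires, but they contribute only boundedly many successors and so do not threaten finiteness.
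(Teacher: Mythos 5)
Your proof is correct, and it bottoms out in the same two facts as the paper's proof: a finite rule set admits only finitely many matches into a finite host graph (so $\mathrm{[call_1]}$ branches finitely), and $\mathrm{[or_1]}$/$\mathrm{[or_2]}$ contribute exactly two successors. The difference is one of organization. The paper argues directly: it observes that $\mathrm{[call_1]}$, $\mathrm{[or_1]}$, and $\mathrm{[or_2]}$ are the \emph{only} rules that cause nondeterminism and bounds each of them, leaving implicit the claim that the premise-based rules ($\mathrm{[seq_i]}$, $\mathrm{[if_i]}$, $\mathrm{[try_i]}$) merely propagate rather than multiply branching. Your structural induction is precisely the formalization of that implicit claim: every successor of $\tuple{\text{ITE}(C,P_1,P_2),S}$, $\tuple{\text{TRY}(C,P_1,P_2),S}$, or $\tuple{P_1;P_2,S}$ is induced by a unique one-step successor of the sub-configuration $\tuple{C,S}$ or $\tuple{P_1,S}$, so the branching degree of the compound configuration is bounded by that of the sub-configuration, which is finite by the induction hypothesis. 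Your route buys rigor---the inheritance of finiteness through the derivation trees of the inference rules is proved rather than asserted---at the cost of the bookkeeping over axioms and the $\mtt{break}$ corner cases, which you handle correctly ($\mathrm{[alap_2]}$ fires only on the pattern $\text{TRY}(\mtt{break},P!,\skiptt)$, and a bare $\mtt{break}$ has no successor at all, which is still finite). One shared caveat, not a gap relative to the paper: finiteness of $\{G \mid \text{top}(S) \dder_R G\}$ holds up to isomorphism of result graphs, and both your proof and the paper's make this identification tacitly.
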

\begin{proof}
The only inference rules that cause nondeterminism are $\mathrm{[or_1]}$, $\mathrm{[or_2]}$, and $\mathrm{[call_1]}$. If the rules $\mathrm{[or_1]}$ and $\mathrm{[or_2]}$ are applicable to $\gamma$ then there are exactly two configurations reachable from $\gamma$. In $\mathrm{[call_1]}$, the nondeterminism comes from several GP\,2 rules being called non-deterministically as part of a rule set, as well as from all the ways these rules can be matched in the host graph. Since rule sets and host graphs are finite, the number of configurations reachable from $\gamma$ in one step via the inference rule $\mathrm{[call_1]}$ is finite as well.
\end{proof}

Reynolds \cite{Reynolds98a} defines this kind of nondeterminism using the semantic function instead of the set of configurations reachable in one step. The following corollary shows that this semantics fulfils that definition as well.

\begin{corollary}\label{cor:reynolds-nondet}
Let $P \in \text{CommandSeq}$ and $G \in \mathcal{G}$ such that $\llbracket P \rrbracket G$ is infinite. Then $\bot \in \llbracket P \rrbracket G$.
\end{corollary}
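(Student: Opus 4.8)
The plan is to derive a diverging computation from an infinite outcome set by a König's Lemma argument on the tree of transition sequences. First I would observe that $\llbracket P \rrbracket G \subseteq [\mathcal{G}] \cup \{\text{fail},\bot\}$, so the two elements $\text{fail}$ and $\bot$ contribute at most two members; hence the hypothesis that $\llbracket P \rrbracket G$ is infinite forces $\llbracket P \rrbracket G \cap [\mathcal{G}]$ to be infinite. By the definition of the semantic function this means there are infinitely many pairwise distinct single-graph stacks $[G']$ with $\tuple{P,[G]} \to^+ [G']$. The goal then reduces to showing that $P$ can diverge from $G$, which by definition of $\llbracket\_\rrbracket$ is exactly $\bot \in \llbracket P \rrbracket G$.

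Next I would set up the \emph{transition tree} $T$ whose nodes are the finite transition sequences starting in $\tuple{P,[G]}$, ordered by the prefix relation, so that the children of a node are obtained by appending one more $\to$-step to its final configuration. By Proposition \ref{prop:finite-nondet}, every configuration has only finitely many one-step successors, so $T$ is finitely branching. By Proposition \ref{prop:non-blocking}, every reachable non-terminal configuration admits a transition, so the leaves of $T$ are exactly the sequences ending in a terminal configuration (a single-graph stack or $\text{fail}$). Each of the infinitely many distinct graphs $[G']$ from the previous step is the endpoint of at least one such leaf, and sequences with distinct endpoints are distinct nodes; hence $T$ has infinitely many leaves and is therefore an infinite tree.

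I would then apply König's Lemma: an infinite, finitely branching tree has an infinite path. (If one prefers to avoid quoting it, the same conclusion follows by iterated pigeonhole, since the infinitely many terminal graphs must accumulate below one of the finitely many successors at each stage.) Such a path is precisely an infinite transition sequence $\tuple{P,[G]} \to \tuple{P_1,S_1} \to \tuple{P_2,S_2} \to \cdots$; it cannot reach a terminal configuration, which would be a leaf, so it is the first case of Theorem \ref{prop:outcomes}. This is divergence of $P$ from $G$, and by the definition of $\llbracket\_\rrbracket$ we obtain $\bot \in \llbracket P \rrbracket G$, as required.

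I expect the main obstacle to be making the König step fully rigorous, in particular indexing the tree by transition \emph{sequences} rather than by configurations. The same configuration can recur along or across branches, so the distinctness of the infinitely many terminal graphs must be transported to distinctness of tree nodes — which it is, since two sequences with different endpoints are different nodes — and one must confirm that an infinite branch yields a genuinely non-terminating computation rather than a spurious repetition, which is automatic once nodes are taken to be sequences. The remaining ingredients, finite branching and the characterisation of leaves, are immediate from Propositions \ref{prop:finite-nondet} and \ref{prop:non-blocking}.
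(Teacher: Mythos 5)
Your proof is correct and takes essentially the same approach as the paper's: a K\"onig's Lemma argument combined with finite nondeterminism (Proposition \ref{prop:finite-nondet}) to extract an infinite transition sequence from the infinitely many reachable terminal graphs, whence $\bot \in \llbracket P \rrbracket G$ by the definition of the semantic function. Your one deviation---taking the tree's nodes to be prefix-ordered finite transition sequences rather than reachable configurations---is a refinement rather than a different route, and is in fact slightly more careful than the paper's own wording, since the set of reachable configurations with transition edges is really a directed graph in which configurations can recur, so your unfolding into sequences is exactly what makes the K\"onig step rigorous.
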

\begin{proof}
Let $\gamma_0 = \tuple{P,[G]}$. Then $T^*_{\gamma_0} = \{\gamma \,|\, \gamma_0 \to^* \gamma \in \text{ExtCommSeq} \times \mathcal{S} \}$ is infinite as well since it contains all elements of $\llbracket P \rrbracket G$ except perhaps fail or $\bot$. The set $T^*_{\gamma_0}$ can be seen as a tree whose nodes are configurations and whose edges are defined by transition relations. Since $T_{\gamma}$ is finite for all configurations $\gamma$ by Proposition \ref{prop:finite-nondet}, each node in the tree only has finitely many adjacent nodes. By K\"onig's Lemma \cite{Konig27}, the tree contains an infinite path. Since every node of the tree is reachable from the root $\gamma_0$, there is an infinite path starting from $\gamma_0$. By definition of the tree, this means there is an infinite transition sequence starting with $\gamma_0$. By definition of the semantic function, we can conclude that $\bot \in \llbracket P \rrbracket G$.
\end{proof}

\section{Comparison to the Previous Semantics}
\label{sec:comparison}

In this section, we show that the new semantics is a conservative extension of the previous one, i.e. their behaviour is equivalent on converging configurations, and if a configuration diverges in the previous semantics, it also diverges in the new one.

When we mention graph stacks in this section, we allow them to be empty. We use the notation $[G_1,G_2, \dots ,G_k,S]$ (where $G_i$ are graphs, $S$ is a graph stack, and $k>0$) to denote a stack whose top $k$ elements are $G_1,G_2,\dots,G_k$, and whose remaining elements are the elements of $S$.

\begin{lemma}[Simulating Finite Old Transition Sequences]
\label{lem:finite-new}
Let $P \in \text{CommandSeq}$, $G \in \mathcal{G}$, and $X \in \{\tuple{P',\,G'},\, G',\, \text{fail}\}$, where $P' \in \text{CommandSeq}$ and $G' \in \mathcal{G}$. If $\tuple{P,\,G} \rightsquigarrow^* X$, then, for any graph stack $S$, there is a transition sequence
\begin{itemize}
    \item $\tuple{P,\,[G,S]} \to^* \tuple{P',\,[G',S]}$ if $X=\tuple{P',\,G'}$.
    \item $\tuple{P,\,[G,S]} \to^* [G']$ if $X=G'$.
     \item $\tuple{P,\,[G,S]} \to^* \text{fail}$ if $X=\text{fail}$.
\end{itemize}
\end{lemma}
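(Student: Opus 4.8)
The plan is to prove the three clauses simultaneously by induction on the old derivation. Because the branching and loop rules of the old semantics ($\mathrm{[if_1']}$, $\mathrm{[if_2']}$, $\mathrm{[try_1']}$, $\mathrm{[try_2']}$, $\mathrm{[alap_1']}$--$\mathrm{[alap_3']}$) carry $\rightsquigarrow^+$/$\rightsquigarrow^*$ premises, a single top-level old step can hide an arbitrarily long subcomputation; so I would induct on the \emph{total unfolded length} of the derivation of $\tuple{P,G} \rightsquigarrow^* X$, i.e.\ the number of single $\rightsquigarrow$-steps counted recursively inside all premise subderivations. I would also prove the slightly strengthened form in which the context is preserved: the graph clause reaches $[G',S]$ rather than $[G']$ (the stated $[G']$ is the special case $S=[\,]$). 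Keeping $S$ explicit is essential, since the cases for sequential composition and for the conditions of branching statements feed the result of one subderivation, stack and all, into the next.

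The induction splits the top-level sequence $\tuple{P,G} = \gamma_0 \rightsquigarrow \gamma_1 \rightsquigarrow^* X$ into its first step and its tail. For length $0$ the first clause holds reflexively. Otherwise I simulate the single step $\gamma_0 \rightsquigarrow \gamma_1$ by a new sequence $\tuple{P,[G,S]} \to^* \gamma_1'$, where $\gamma_1'$ is the evident counterpart of $\gamma_1$ with context preserved, and, if $\gamma_1$ is a configuration, I chain with the induction hypothesis applied to the strictly shorter tail. The simulation of one old step is a case analysis on the rule justifying it. The axioms and genuinely small-step rules are routine: $\mathrm{[call_1']}$, $\mathrm{[call_2']}$, $\mathrm{[skip']}$, $\mathrm{[fail']}$, $\mathrm{[or_i']}$, $\mathrm{[break']}$ and the clause-completing rules $\mathrm{[if_3']}$, $\mathrm{[try_3']}$--$\mathrm{[try_5']}$ map to their unprimed counterparts, while the three $\mathrm{[seq_i']}$ rules use the induction hypothesis on the (shorter) premise subderivation followed by a congruence step. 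The real work is in the big-step rules.

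For the branching and loop rules I would first establish three \emph{congruence} sublemmas: if $\tuple{A,T} \to^* \tuple{B,T'}$ through configurations only, then $\tuple{A;Q,T} \to^* \tuple{B;Q,T'}$, $\tuple{\text{ITE}(A,P,Q),T} \to^* \tuple{\text{ITE}(B,P,Q),T'}$, and $\tuple{\text{TRY}(A,P,Q),T} \to^* \tuple{\text{TRY}(B,P,Q),T'}$, each obtained by applying $\mathrm{[seq_1]}$, $\mathrm{[if_2]}$, $\mathrm{[try_2]}$ stepwise. Then, for instance for $\mathrm{[try_1']}$ with premise $\tuple{C,G}\rightsquigarrow^+ H$: apply $\mathrm{[try_1]}$ to push the backup and reach $\tuple{\text{TRY}(C,P,Q),[G,G,S]}$; invoke the induction hypothesis (graph clause) to get $\tuple{C,[G,G,S]} \to^* [H,G,S]$; lift its configuration-prefix through the TRY-congruence and finish the final step with $\mathrm{[try_3]}$, whose $\text{pop2}$ deletes the backup and yields $\tuple{P,[H,S]}$, matching $\tuple{P,H}$. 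The $\mathrm{[if_i']}$ cases are identical except that $\mathrm{[if_3]}$/$\mathrm{[if_4]}$ use $\text{pop}$, reverting to the backup $G$; the three $\mathrm{[alap_i']}$ cases first rewrite $P!$ via $\mathrm{[alap_1]}$ and $\mathrm{[try_1]}$ and then reduce to the try analysis, using $\mathrm{[alap_2]}$ for the $\mathrm{[break]}$-terminated loop.

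The main obstacle is the stack bookkeeping in the failure subcases of $\mathrm{[if_2']}$/$\mathrm{[try_2']}$ (and hence of $\mathrm{[alap_2']}$). There $\mathrm{[if_4]}$/$\mathrm{[try_4]}$ must be applied to the configuration $\tuple{C_{\text{last}},T}$ \emph{one step before} failure, and correctness requires $\text{pop}(T)=[G,S]$, i.e.\ that the backup $G$ and the context $S$ still sit directly below a single working graph. This is where I would lean on the balance argument behind Lemma~\ref{lem:stack}: throughout the execution of a command sequence the part of the stack below its working frame is never touched, and, crucially, failure can reach the top level of $C$ only when no ITE or TRY is open, since such a statement never forwards failure ($\mathrm{[if_4]}$/$\mathrm{[try_4]}$ branch to the else-part instead). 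Hence at the failing step the count $\#(C_{\text{last}})$ of open ITE/TRY is $0$, so $T$ has exactly the shape $[G'',G,S]$ and $\text{pop}(T)=[G,S]$. Establishing this invariant cleanly — that subcommands preserve their context and that failure surfaces only with an empty ITE/TRY count — is the crux; the remaining $\text{push}$/$\text{pop}$/$\text{pop2}$ computations are then routine.
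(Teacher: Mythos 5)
Your proposal is correct, and at the level of individual rule simulations it does exactly what the paper's proof does, since this part is essentially forced by the inference rules: enter with $\mathrm{[if_1]}$/$\mathrm{[try_1]}$ (or $\mathrm{[alap_1]}$ followed by $\mathrm{[try_1]}$), lift the simulated premise stepwise through $\mathrm{[if_2]}$/$\mathrm{[try_2]}$, and close with $\mathrm{[if_3]}$, $\mathrm{[if_4]}$, $\mathrm{[try_3]}$, $\mathrm{[try_4]}$, or $\mathrm{[alap_2]}$, checking the push/pop/pop2 bookkeeping. Where you genuinely differ is the induction scaffolding. The paper inducts on the combined number of \ttt{if}, \ttt{try} and \ttt{!} statements in $P$: its induction hypothesis is applied only to conditions and loop bodies (which are syntactically smaller), steps involving no branching or loop rule are declared identical in both semantics, and the per-step simulations of the top-level sequence are concatenated implicitly rather than by induction. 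You instead induct on the total unfolded derivation length, which lets you apply the induction hypothesis uniformly both to premise subderivations and to the tail of the top-level sequence; this makes the concatenation part of the induction and is arguably tighter (the paper's measure could not be applied to the tail, e.g.\ after a loop iteration $\tuple{P_3!,G_3}\rightsquigarrow\tuple{P_3!,H}$ the statement count is unchanged), at the cost of defining a measure on derivation trees. Two further points in your favour. First, your strengthened context-preserving statement ($\tuple{P,[G,S]}\to^*[G',S]$ rather than $[G']$) is exactly what the paper's proof actually establishes (its displayed sequences end in $[G_3,S']$, $[H,S']$, etc.); as literally stated, the second clause is attainable only when $S$ is empty, so the strengthening is needed for the induction to go through at all. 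Second, the invariant you isolate for the failure cases --- fail can surface only when no ITE/TRY is open, so the stack one transition before failure has the shape $[G'',G,S]$ and popping restores the backup --- is spelled out in the paper only inside the $\mathrm{[alap_2']}$ case (``fail is not reachable in one step from \ttt{if}, \ttt{try}, \ttt{!}, ITE, and TRY statements\dots'') and is used silently in the $\mathrm{[if_2']}$/$\mathrm{[try_2']}$ cases; you correctly identify it as the crux and invoke the balance argument behind Lemma~\ref{lem:stack} uniformly, which makes your version of the argument slightly more careful than the paper's.
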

\begin{proof}
We shall prove this lemma by induction on the number of \ttt{if}, \ttt{try}, and \ttt{!} statements in $P$ combined.

\smallskip
If $P$ has no \ttt{if}, \ttt{try}, or \ttt{!} statements, none of the $\mathrm{[if]}$, $\mathrm{[try]}$, and $\mathrm{[alap]}$ inference rules are applicable. All other rules behave identically in both semantics when identifying the tops graph of the stacks in the new rules with the graphs in the previous rules. Hence the base case is satisfied.

\smallskip
As the induction hypothesis, assume the lemma holds for $P$ containing $k$ \ttt{if}, \ttt{try}, or \ttt{!} statements. Now consider the case where $P$ contains $k+1$ of them. Let $\tuple{P_1,\,G_1} \rightsquigarrow \tuple{P_2,\,G_2}$ be a derivation step of $\tuple{P,G} \rightsquigarrow^* X$ that uses an $\mathrm{[if]}$, $\mathrm{[try]}$, or $\mathrm{[alap]}$ rule (possibly as a premise for another rule such as $\mathrm{[seq_1]}$). If such a step does not exist, the lemma holds by the same argument used in the base case. Consider the $\mathrm{[if]}$, $\mathrm{[try]}$, or $\mathrm{[alap]}$ rule $\mathrm{[r]}$ that relates to the \ttt{if}, \ttt{try}, or \ttt{!} statement enclosing all others resolved in the same step. Then no rule where $\mathrm{[r]}$ is a premise (or the premise of a premise) is an $\mathrm{[if]}$, $\mathrm{[try]}$, or $\mathrm{[alap]}$ rule. Since those are identical in both semantics, we only need to show that the part of $P$ resolved by $\mathrm{[r]}$ is resolved in a way that fulfils the lemma statement.

\smallskip
If $\mathrm{[r]}=\mathrm{[if_1]}$, we have $\tuple{\ifte{C}{P_3}{Q},\, G_3} \rightsquigarrow \tuple{P_3,\, G_3}$ under the premise of $\tuple{C,\, G_3} \rightsquigarrow^+ H$. Since $P$ contains $k+1$ \ttt{if}, \ttt{try}, or \ttt{!} statements, $C$ contains at most $k$ of them. So by the induction hypothesis, there is a transition sequence $\tuple{C,\, [G_3,S]} \to^* [H,S]$ (for any graph stack $S$). We can decompose this transition sequence into $\tuple{C,\, [G_3,S]} \to^l \tuple{C_4,\,S_4} \to [H,S]$ where $l\geq0$, and $S_4$ is a graph stack. This fulfills the premise of $\mathrm{[if_2]}$ $l$ times, and then the premise of $\mathrm{[if_3]}$ once. So for any graph stack $S'$, the premises are fulfilled by choosing $S=[G_3,S']$, and we have

\smallskip \noindent \begin{tabular}{r l}
$\tuple{\ifte{C}{P_3}{Q},\, [G_3,S']}$  
   &
   $\to_{\mathrm{[if_1]}} \tuple{\text{ITE}(C,P_3,Q),\,[G_3,G_3,S']} \to^l_{\mathrm{[if_2]}} \tuple{\text{ITE}(C_4,P_3,Q),\,S_4}$
   \\
     & 
   $\to_{\mathrm{[if_3]}} \tuple{P_2,[G_3,S']}\text{.}$
\end{tabular}

\smallskip
If $\mathrm{[r]}=\mathrm{[if_2]}$, we have $\tuple{\ifte{C}{P_3}{Q},\, G_3} \rightsquigarrow \tuple{Q,\, G_3}$ under the premise of $\tuple{C,\, G_3} \rightsquigarrow^+ \failrm$. Again, we can use the induction hypothesis to conclude $\tuple{C,\, [G_3,S]} \to^* \failrm$ (for any graph stack $S$), which is a sequence of $l\geq0$ transitions $\tuple{C,\, [G_3,S]} \to^l \tuple{C_4,\,S_4}$ between extended configurations, followed by $1$ transition $\tuple{C_4,\,S_4} \to \failrm$. This fulfills the premise of $\mathrm{[if_2]}$ $l$ times, and then the premise of $\mathrm{[if_4]}$ once. So for any graph stack $S'$, the premises are fulfilled by choosing $S=[G_3,S']$, and we have

\setlength{\tabcolsep}{2pt}
\renewcommand{\arraystretch}{1.1}
\smallskip \noindent \begin{tabular}{r l}
$\tuple{\ifte{C}{P_3}{Q},\, [G_3,S']}$  
   &
   $\to_{\mathrm{[if_1]}} \tuple{\text{ITE}(C,\,P_3,\,Q),\,[G_3,G_3,S']} \to^l_{\mathrm{[if_2]}} \tuple{\text{ITE}(C_4,\,P_3,\,Q),\,S_4}$
   \\
     & 
   $\to_{\mathrm{[if_4]}} \tuple{Q,\,[G_3,S']}\text{.}$
\end{tabular}

\smallskip
If $\mathrm{[r]}=\mathrm{[try_1]}$, we can use the same arguments as when $\mathrm{[r]}=\mathrm{[if_1]}$ to get a transition sequence $\tuple{\tryte{C}{P_3}{Q},\, [G_3,S']} \to^+ \tuple{P_3,\,[G_4,S']}$ for any graph stack $S'$.

\smallskip
If $\mathrm{[r]}=\mathrm{[try_2]}$, we can use the same arguments as when $\mathrm{[r]}=\mathrm{[if_2]}$ to get a transition sequence $\tuple{\tryte{C}{P_3}{Q},\, [G_3,S']} \to^+ \tuple{Q,\,[G_3,S']}$ for any graph stack $S'$. 

\smallskip
If $\mathrm{[r]}=\mathrm{[alap_1]}$, we have $\tuple{P_3!,\, G_3} \rightsquigarrow \tuple{P_3!,\, H}$ under the premise of $\tuple{P_3,\, G_3} \rightsquigarrow^+ H$. Since $P$ contains $k+1$ \ttt{if}, \ttt{try}, or \ttt{!} statements, $P_3$ contains at most $k$ of them. So by the induction hypothesis, we can conclude $\tuple{P_3,\, [G_3,S]} \to^* [H,S]$  for any graph stack $S$. We can decompose this transition sequence into $\tuple{P_3,\, [G_3,S]} \to^l \tuple{P_4,\,S_4} \to [H,S]$, where $l \geq 0$, and $S_4$ is a graph stack. These derivations fulfil the premise of $\mathrm{[try_2]}$ $l$ times and then the premise of $\mathrm{[try_3]}$ once. So for any graph stack $S'$, the premises are fulfilled by choosing $S=[G_3,S']$, and we have

\smallskip \noindent \begin{tabular}{r l}
$\tuple{P_3!,\, [G_3,S']}$  
   &
   $\to_{\mathrm{[alap_1]}} \tuple{\tryte{P_3}{P!}{\skiptt},[G_3,S']} \to_{\mathrm{[try_1]}} \tuple{\text{TRY}(P_3,\,P_3!,\skiptt),\,[G_3,G_3,S']}$
   \\
     & 
   $\to^l_{\mathrm{[try_2]}} \tuple{\text{TRY}(P_4,P_3!,\skiptt),\,S_4} \to_{\mathrm{[try_3]}} \tuple{P_3!,\,[H,S']} \text{.}$
\end{tabular}

\smallskip
If $\mathrm{[r]}=\mathrm{[alap_2]}$, we have $\tuple{P_3!,\, G_3} \rightsquigarrow G_3$ under the premise of $\tuple{P_3,\, G_3} \rightsquigarrow^+ \failrm$. Again, we can use the induction hypothesis to conclude $\tuple{P_3,\, [G_3,S]} \to^* \failrm$ for any graph stack $S$. We can decompose this transition sequence into $\tuple{P_3,\, [G_3,S]} \to^l \tuple{P_4,\,[G_4,S]} \to \failrm$, where $l \geq 0$, and $G_4$ is a graph. Let us argue that the graph stack right before the fail is $[G_4,S]$. The stack ends in $S$ because $P_3$ is a command sequence and hence does not contain ITE or TRY statements, which are the only constructs that could pop and hence modify $S$. Any \ttt{if}, \ttt{try}, and \ttt{!} statements in $P_3$ must resolve before the configuration that leads to fail in one step because fail is not reachable in one step from \ttt{if}, \ttt{try}, \ttt{!}, ITE, and TRY statements. Since they all resolved, any push has a corresponding pop, meaning the number of graphs in the stack before $S$ remains $1$. The derivations fulfil the premise of $\mathrm{[try_2]}$ $l$ times and then the premise of $\mathrm{[try_4]}$ once. So for any graph stack $S'$, the premises are fulfilled by choosing $S=[G_3,S']$, and we have

\smallskip \noindent \begin{tabular}{r l}
$\tuple{P_3!,\, [G_3,S']}$  
   &
   $\to_{\mathrm{[alap_1]}} \tuple{\tryte{P_3}{P!}{\skiptt},\,[G_3,S']} \to_{\mathrm{[try_1]}} \tuple{\text{TRY}(P_3,\,P_3!,\skiptt),\,[G_3,G_3,S']}$
   \\
     & 
   $\to^l_{\mathrm{[try_2]}} \tuple{\text{TRY}(P_4,P_3!,\skiptt),\,[G_4,G_3,S']} \to_{\mathrm{[try_4]}} \tuple{\skiptt,\,[G_3,S']} \to_{\mathrm{[skip]}} [G_3,S'] \text{.}$
\end{tabular}

\smallskip
If $\mathrm{[r]}=\mathrm{[alap_3]}$, we have $\tuple{P_3!,\, G_3} \rightsquigarrow H$ under the premise of $\tuple{P_3,\, G_3} \rightsquigarrow^* \tuple{\ttt{break},H}$. Again, we can use the induction hypothesis to conclude $\tuple{P_3,\, [G_3,S]} \to^* \tuple{\ttt{break},[H,S]}$ (for any graph stack $S$). Let $l\geq 0$ be the number of steps in that transition sequence. These derivations fulfil the premise of $\mathrm{[try_2]}$ $l$ times. So for any graph stack $S'$, the premises are fulfilled by choosing $S=[G_3,S']$, and we have

\smallskip \noindent \begin{tabular}{r l}
$\tuple{P_3!,\, [G_3,S']}$  
   &
   $\to_{\mathrm{[alap_1]}} \tuple{\tryte{P_3}{P_3!}{\skiptt},\,[G_3,S']} \to_{\mathrm{[try_1]}} \tuple{\text{TRY}(P_3,\,P_3!,\skiptt),\,[G_3,G_3,S']}$
   \\
     & 
   $\to^l_{\mathrm{[try_2]}} \tuple{\text{TRY}(\ttt{break},P_3!,\skiptt),\,[H,G_3,S']} \to_{\mathrm{[alap_2]}} [H,S'] \text{.}$
\end{tabular}

\vspace{-.51cm}

\end{proof}

\begin{lemma}[Configurations That Get Stuck]
\label{lem:stuck-conf}
Let $\tuple{P,G}$ be a configuration to which no inference rule of the previous semantics is applicable. Then $P$ starts with an \ttt{if}, \ttt{try}, or \ttt{!} statement such that the condition (or the body in the case of \ttt{!}) diverges or gets stuck in the previous semantics.
\end{lemma}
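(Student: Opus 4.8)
The plan is to argue directly by a case analysis on the \emph{leftmost} command of $P$, exploiting that ``$\tuple{P,G}$ is stuck'' means literally that none of the inference rules of Figure~\ref{fig:semantics-old} applies. I would write $P$ either as a single command or as $P = P_1\ttt{;}\,Q$, where $P_1$ is the leftmost command (not itself a sequential composition), and I would treat a parenthesised block $(S)$ as transparent, identifying $\tuple{(S),G}$ with $\tuple{S,G}$. The whole argument is then an induction on the structure of $P$ in which sequential composition and grouping are the reduction steps, while the atomic, branching, and loop commands are the base cases. Throughout, I assume $P$ satisfies the context conditions of \cite{Bak15a}.

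For the sequential case $P = P_1\ttt{;}\,Q$ I would observe that if $\tuple{P_1,G}$ admits \emph{any} transition --- to a configuration, a graph, or $\failrm$ --- then $\mathrm{[seq_1']}$, $\mathrm{[seq_2']}$, or $\mathrm{[seq_3']}$ fires, and if $P_1 = \mtt{break}$ then $\mrm{[break']}$ fires. Hence if $\tuple{P,G}$ is stuck then $\tuple{P_1,G}$ is itself stuck and $P_1 \neq \mtt{break}$; since $P$ starts with $P_1$, it then suffices to establish the claim for the single command $P_1$. Parenthesised groups without a trailing \ttt{!} are dispatched by transparency together with the induction hypothesis, so that nesting reduces to the head command of the innermost sequence.

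For the single-command base cases I would run through the syntax. If $P_1$ is a rule-set call then exactly one of $\mathrm{[call_1']}$, $\mathrm{[call_2']}$ applies, according to whether $G \dder_{P_1} H$ for some $H$ or $G \not\dder_{P_1}$; if $P_1$ is $\skiptt$, $\failtt$, an \ttt{or}, a one-sided \ttt{if}, or any abbreviated \ttt{try} form, then one of the axioms $\mrm{[skip']}$, $\mrm{[fail']}$, $\mrm{[or_1']}$, $\mrm{[if_3']}$, $\mathrm{[try_3']}$--$\mathrm{[try_5']}$ applies unconditionally. In all of these cases $\tuple{P_1,G}$ is not stuck, so they cannot occur. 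The only remaining shapes are a full \ttt{if}, a full \ttt{try}, and a loop $B!$, whose sole applicable rules are $\mathrm{[if_1']}/\mathrm{[if_2']}$, $\mathrm{[try_1']}/\mathrm{[try_2']}$, and $\mathrm{[alap_1']}/\mathrm{[alap_2']}/\mathrm{[alap_3']}$. Each of these has a premise requiring the condition $C$ (resp. the body $B$) to reach a graph or $\failrm$, and, for the loop, additionally a configuration $\tuple{\mtt{break},H}$. Therefore ``no rule applies'' is exactly the statement that no transition sequence of the condition or body reaches a graph or $\failrm$ (nor a \ttt{break} configuration), which is the desired conclusion.

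The main obstacle, and the point needing most care, is the treatment of \ttt{break}: a bare $\tuple{\mtt{break},G}$ has no applicable rule yet does not start with \ttt{if}, \ttt{try}, or \ttt{!}, so the claim can only hold under the context conditions, which force every \ttt{break} to lie inside a loop. I would use these to argue that the leftmost command of a legal configuration is never an exposed \ttt{break}; any \ttt{break} sits inside some $B!$, placing us in the loop case. The second delicate point is reading ``diverges or gets stuck'' correctly under nondeterminism: the branching and loop premises are existential, requiring only that \emph{some} run of $C$ reach a graph or $\failrm$, so the negation (stuck-ness) means that \emph{every} maximal run of $C$ (resp. $B$) is infinite or stuck. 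Making this equivalence explicit for each of the $\mathrm{[if]}$, $\mathrm{[try]}$, and $\mathrm{[alap]}$ rules is the technical heart of the proof.
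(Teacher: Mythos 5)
Your proposal is correct and takes essentially the same route as the paper's own proof: a case analysis on the leading command (using the $\mathrm{[seq']}$ rules to reduce stuckness of $\tuple{P_1;Q,\,G}$ to stuckness of $\tuple{P_1,G}$), observing that every non-branching, non-loop head always admits a step, and concluding by contraposition that a stuck \ttt{if}, \ttt{try}, or \ttt{!} must have a condition (or body) all of whose maximal runs diverge or get stuck. If anything, you are more careful than the paper on two points it glosses over --- the bare $\tuple{\ttt{break},G}$ edge case, which you correctly exclude via the context conditions rather than by asserting $\mathrm{[break']}$ applies, and the existential-to-universal flip in the premises under nondeterminism --- so no changes are needed.
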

\begin{proof}
For this proof, whenever we say a rule is applicable, we mean it is either applicable, or it can be used as a premise for a $\mathrm{[seq]}$ rule.

Let us first show that, if $P$ does not start with an \ttt{if}, \ttt{try}, or \ttt{!} statement, we can apply an inference rule to $\tuple{P,G}$. If $P$ starts with a rule set, that rule set is either applicable to the host graph or not, so either $\mathrm{[call_1]}$ or $\mathrm{[call_2]}$ can be applied. If $P$ starts with a \ttt{break}, \ttt{skip}, or \ttt{fail}, then $\mathrm{[break]}$, $\mathrm{[skip]}$, or $\mathrm{[fail]}$ can be applied respectively. If $P$ starts with an \ttt{or} statement, either $\mathrm{[or_1]}$ or $\mathrm{[or_2]}$ can be applied.

Now assume that $P$ starts with an \ttt{if}, \ttt{try}, or \ttt{!} statement with a condition or body $C$. If $\tuple{C,G}$ neither converges nor gets stuck, there is a transition sequence $\tuple{C,G} \rightsquigarrow \failrm$, or $\tuple{C,G} \rightsquigarrow H$ for some graph $H$. Hence one of $\mathrm{[if_1]}$, $\mathrm{[if_2]}$, $\mathrm{[try_1]}$, $\mathrm{[try_2]}$, $\mathrm{[alap_1]}$, or $\mathrm{[alap_2]}$ is applicable.
\end{proof}

\begin{lemma}[Loop-Free Command Sequences Do Not Get Stuck]
\label{lem:non-stuck-conf}
For every loop-free command sequence $P$ and graph $G$, no transition sequence starting with $\tuple{P,G}$ gets stuck in the previous semantics.
\end{lemma}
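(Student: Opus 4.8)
The plan is to combine the structural characterisation of stuck configurations from Lemma~\ref{lem:stuck-conf} with the fact that a loop-free program cannot diverge. Since Lemma~\ref{lem:stuck-conf} tells us that a stuck configuration must begin with an \ttt{if}, \ttt{try}, or \ttt{!} whose condition (or body) diverges or gets stuck, and since loop-freeness excludes the \ttt{!} case and makes every such condition itself loop-free, I would rule out divergence outright and dispatch the remaining stuck case by induction.

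First I would record two invariants by inspecting the rules of Figure~\ref{fig:semantics-old}: along $\rightsquigarrow$ the command part of a configuration only ever becomes a terminal, a branch, or a subterm of its predecessor (for $\mathrm{[seq_1']}$ via a routine sub-induction on the first component), and no rule inserts a ``\ttt{!}''; hence loop-freeness is preserved, and since the context conditions force every \ttt{break} to lie inside a loop, a valid loop-free command sequence contains no \ttt{break} at all, keeping us squarely in the regime covered by Lemma~\ref{lem:stuck-conf}. I would then show that loop-free configurations cannot diverge, using the syntactic size of the command sequence as a well-founded measure: the only rule that does \emph{not} strictly decrease this size is the loop rule $\mathrm{[alap_1']}$, $\tuple{P!,G}\rightsquigarrow\tuple{P!,H}$, which is never applicable to a loop-free configuration. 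Thus the size strictly decreases at each step, so every transition sequence from a loop-free configuration is finite (the inner $\rightsquigarrow^+$ premises of the branching rules do not contribute to this outer sequence, so no subtlety arises there).

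The non-blocking claim I would then prove by induction on the number $n$ of \ttt{if} and \ttt{try} statements in $P$, quantifying over all loop-free command sequences with at most $n$ of them and all graphs. Assume some reachable $\tuple{P',G'}$ is stuck. By the invariants $P'$ is loop-free, so by Lemma~\ref{lem:stuck-conf} it must begin with an \ttt{if} or \ttt{try} (the \ttt{!} case being impossible) whose condition $C$ diverges or gets stuck from $G'$. Now $C$ is a loop-free strict subterm of $P'$, and since $\rightsquigarrow$ never increases the branching count, $C$ has strictly fewer than $n$ branching statements; the termination argument rules out divergence of $C$, and the induction hypothesis rules out $\tuple{C,G'}$ getting stuck, so $\tuple{C,G'} \rightsquigarrow^+ X$ for a terminal $X$. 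This satisfies the premise of one of $\mathrm{[if_1']}$, $\mathrm{[if_2']}$, $\mathrm{[try_1']}$, $\mathrm{[try_2']}$, contradicting that $\tuple{P',G'}$ is stuck; the base case $n=0$ is immediate from Lemma~\ref{lem:stuck-conf}, since a loop-free program with no branching statements cannot begin with an \ttt{if}, \ttt{try}, or \ttt{!}.

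The delicate part is getting the induction measure and the quantifiers right: I must fix the bound $n$ on the \emph{original} program $P$ and let the graph range over all reachable $G'$, so that the induction hypothesis applies to $\tuple{C,G'}$ rather than only to subterms of $P$ evaluated at the starting graph. This in turn relies on the small but essential observation that no rule of the previous semantics increases the number of branching statements in the command sequence, which guarantees that the condition $C$ of any branching statement reachable from $P$ is strictly smaller in the chosen measure than $P$ itself.
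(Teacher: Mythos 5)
Your overall route is viable and genuinely different from the paper's. The paper proves Lemma~\ref{lem:non-stuck-conf} by a direct structural induction on $P$ (base cases: a single rule set, \ttt{skip}, \ttt{fail}; inductive cases: sequential composition, \ttt{if}/\ttt{try}, omitted clauses, \ttt{or}), in each case using the induction hypothesis on subprograms to extend transition sequences; it never invokes Lemma~\ref{lem:stuck-conf}. Notably, the paper's \ttt{if}/\ttt{try} cases silently identify ``no sequence from $\tuple{C,G}$ gets stuck'' with ``some sequence from $\tuple{C,G}$ reaches a graph or $\failrm$'', which is only legitimate because loop-free programs terminate; your proposal makes this termination step explicit, which is a genuine improvement in rigour (a diverging condition $C$ really would make $\tuple{\ifte{C}{P_1}{P_2},G}$ stuck, so the assumption is load-bearing in both proofs). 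Your quantifier discipline in the main induction (bounding the branching count of the original program, letting the graph range over reachable states) is also handled correctly, and your observation that loop-freeness plus context conditions exclude \ttt{break} entirely is exactly what keeps Lemma~\ref{lem:stuck-conf} applicable.

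There is, however, one concrete false step: it is not true that $\mathrm{[alap_1']}$ is the only rule of the previous semantics that fails to strictly decrease syntactic size. The clause-completion axioms $\mathrm{[if_3']}$, $\mathrm{[try_3']}$, $\mathrm{[try_4']}$, and $\mathrm{[try_5']}$ strictly \emph{increase} it: for instance $\tuple{\ift{C}{P},\,G} \rightsquigarrow \tuple{\ifte{C}{P}{\skiptt},\,G}$ inserts a \ttt{skip}, and $\mathrm{[try_5']}$ inserts two; via $\mathrm{[seq_1']}$ these increases propagate into sequential compositions. So ``the size strictly decreases at each step'' fails as stated, and the same defect breaks your preliminary invariant that the command part always becomes a terminal, a branch, or a subterm of its predecessor (it can become a proper superterm). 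The repair is routine but necessary: use, say, the lexicographic measure consisting of (i) the size of the \emph{completion} of the command, in which every omitted \ttt{then}/\ttt{else} clause is filled in with \ttt{skip}, and (ii) the number of omitted clauses. The completion axioms preserve (i) and strictly decrease (ii), while every other rule applicable to a loop-free, \ttt{break}-free command strictly decreases (i). The invariants you actually use downstream --- preservation of loop-freeness and non-increase of the branching count --- do hold for all rules, so with the patched measure your argument goes through.
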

\begin{proof}
We prove this lemma by structural induction. For a base case, consider programs consisting of a single rule set $R$. Then on any graph $G$, $R$ is either applicable or not. So to $\tuple{R,G}$ we can apply either $\mathrm{[call_1]}$ or $\mathrm{[call_2]}$, leading to a graph or fail. They are both transition sequences that end in a terminal state, and hence do not get stuck.

Another base case is \ttt{skip} or \ttt{fail}, which always lead to a graph or fail in a single step, and hence cannot lead to stuck transition sequences.

The \ttt{break} statement cannot be in $P$ since context conditions require it to have an enclosing loop.

For the induction step, assume that every proper subprogram of $P$ cannot get stuck, and show that $P$ cannot get stuck either.

Assume $P=P_1;P_2$. By the induction hypothesis, for any graph $G$, no transition sequence starting with $\tuple{P_1,G}$ gets stuck, i.e. they can all be extended to either $\tuple{P_1,G} \rightsquigarrow^+ \failrm$ or $\tuple{P_1,G} \rightsquigarrow^+ H$ for some graph $H$. This fulfils the premise of $\mathrm{[seq_1]}$ some number of times, and then the premise of either $\mathrm{[seq_2]}$ or $\mathrm{[seq_3]}$ once. So each transition sequence staring with $\tuple{P,G}$ must be of the form $\tuple{P,G} \rightsquigarrow^+_{\mathrm{[seq]}} \failrm$ or $\tuple{P,G} \rightsquigarrow^+_{\mathrm{[seq]}} H$.

Assume $P=\ifte{C}{P_1}{P_2}$. Then by induction hypothesis, for any graph $G$, no transition sequence starting with $\tuple{C,G}$ can get stuck, i.e. they can all be extended to either $\tuple{C,G} \rightsquigarrow^+ \failrm$ or $\tuple{C,G} \rightsquigarrow^+ H$ for some graph $H$. This satisfies the premise of either $\mathrm{[if_1]}$ or $\mathrm{[if_2]}$. So each transition sequence staring with $\tuple{P,G}$ must be of the form $\tuple{P,G} \rightsquigarrow_{\mathrm{[if_1]}} \tuple{P_1,G}$ or $\tuple{P,G} \rightsquigarrow_{\mathrm{[if_2]}} \tuple{P_2,G}$. Any continuation of these sequences cannot get stuck because $P_1$ and $P_2$ satisfy the induction hypothesis.

The case $P=\tryte{C}{P_1}{P_2}$ is analogous to the previous one.

If $P$ is an \ttt{if} or \ttt{try} with omitted \ttt{then} or \ttt{else} clauses, one of $\mathrm{[if_3]}$, $\mathrm{[try_3]}$, $\mathrm{[try_4]}$, or $\mathrm{[try_5]}$ can be applied, and then the arguments used in the \ttt{if} and \ttt{try} cases can be applied.

Assume $P=P_1\ttt{ or }P_2$. Then for any graph $G$, any transition sequence starting with $\tuple{P,G}$ starts with either $\tuple{P,G} \rightsquigarrow_{\mathrm{[or_1]}} P_1$ or $\tuple{P,G} \rightsquigarrow_{\mathrm{[or_2]}} P_2$. Any continuation of these sequences cannot get stuck because $P_1$ and $P_2$ satisfy the induction hypothesis.
\end{proof}

\begin{lemma}[Non-Nested Loops Do Not Get Stuck]
\label{lem:non-stuck-loops}
For every loop-free command sequence $P$ and graph $G$, no transition sequence starting with $\tuple{P!,G}$ gets stuck in the previous semantics.
\end{lemma}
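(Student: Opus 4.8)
The plan is to exploit the big-step character of the old loop rules so as to reduce the whole question to a single fact: at every loop configuration reached along the way, \emph{some} $\mathrm{[alap']}$ rule is applicable. The program component at every non-terminal configuration reachable from $\tuple{P!,G}$ is again the loop $P!$ (it is not a sequential composition, so no $\mathrm{[seq']}$ rule fits it), hence the only rules that can ever fire are $\mathrm{[alap_1']}$, $\mathrm{[alap_2']}$, and $\mathrm{[alap_3']}$. Applying $\mathrm{[alap_1']}$ yields $\tuple{P!,H}$ for some graph $H$ with the \emph{same} loop-free body $P$, while $\mathrm{[alap_2']}$ and $\mathrm{[alap_3']}$ yield a graph, i.e.\ a terminal configuration. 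Thus every transition sequence from $\tuple{P!,G}$ has the shape $\tuple{P!,G} \rightsquigarrow \tuple{P!,H_1} \rightsquigarrow \tuple{P!,H_2} \rightsquigarrow \cdots$, passing only through configurations of the form $\tuple{P!,H}$ until it either terminates in a graph or continues forever. So it suffices to show that each such $\tuple{P!,H}$ admits an applicable rule.

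First I would show that the body terminates with one of three outcomes. Since $P$ is loop-free, no transition sequence of $\tuple{P,H}$ can be infinite, because divergence in the old semantics can only arise from repeated use of $\mathrm{[alap_1']}$, which requires a loop, and the conditions of any nested \ttt{if}/\ttt{try} in $P$ are themselves loop-free. By Lemma~\ref{lem:non-stuck-conf} such a sequence also cannot get stuck as long as it stays break-free, so every maximal sequence from $\tuple{P,H}$ is finite and reaches a graph $H'$, the state $\failrm$, or---if $P$ contains a \ttt{break}---a configuration $\tuple{\ttt{break},H'}$, the latter arising when a \ttt{break} propagates outward through $\mathrm{[break']}$ and discards the commands following it. These three cases supply exactly the premises of $\mathrm{[alap_1']}$ (graph $H'$), $\mathrm{[alap_2']}$ (failure), and $\mathrm{[alap_3']}$ (break with graph $H'$). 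Consequently at least one $\mathrm{[alap']}$ rule is applicable at $\tuple{P!,H}$, and no such configuration is stuck.

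Combining the two observations, every configuration that a transition sequence from $\tuple{P!,G}$ can reach is either terminal or of the form $\tuple{P!,H}$ with an applicable rule; therefore no such sequence gets stuck, even though it may diverge by iterating $\mathrm{[alap_1']}$ indefinitely (divergence is not the same as getting stuck).

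The main obstacle will be the \ttt{break} case, since Lemma~\ref{lem:non-stuck-conf} explicitly assumes break-freeness---a loop-free \emph{valid} command sequence cannot contain a \ttt{break}---whereas the body of $P!$ may well contain one, its enclosing loop being the outer $\ttt{!}$. The delicate point is that $\tuple{\ttt{break},H'}$ is itself a stuck configuration in isolation (no rule of the old semantics applies to it), yet it is precisely the premise of $\mathrm{[alap_3']}$, so the enclosing loop ``rescues'' it. Making this rigorous requires strengthening the structural induction of Lemma~\ref{lem:non-stuck-conf} so that, for a loop-free body possibly containing a \ttt{break}, every maximal body derivation is shown to reach a graph, $\failrm$, or a configuration $\tuple{\ttt{break},H'}$; the \ttt{or}, \ttt{if}, and \ttt{try} cases need care to ensure the \ttt{break} is surfaced via $\mathrm{[break']}$ and the $\mathrm{[seq']}$ rules rather than silently lost.
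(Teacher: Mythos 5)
Your proposal is correct and follows essentially the same route as the paper's proof: reduce the claim to showing that the loop body, being loop-free, cannot diverge and always resolves to a graph, $\failrm$, or $\tuple{\ttt{break},H}$ (so that one of $\mathrm{[alap_1']}$, $\mathrm{[alap_2']}$, $\mathrm{[alap_3']}$ is applicable), then iterate this argument through the configurations $\tuple{P!,H}$ produced by $\mathrm{[alap_1']}$. The delicate \ttt{break} point you flag is handled in the paper by an informal case split (the \ttt{break} is never called, so the argument of Lemma \ref{lem:non-stuck-conf} still applies, versus it is called and surfaces as $\tuple{\ttt{break},H}$ or $\tuple{\ttt{break};Q,H}$, promoted via $\mathrm{[break']}$), which is the same content as the strengthening of Lemma \ref{lem:non-stuck-conf} that you propose, stated less formally.
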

\begin{proof}
It is enough to show that either $\tuple{P,G}\rightsquigarrow^+ H$, $\tuple{P,G} \rightsquigarrow^+ \failrm$, or $\tuple{P,G} \rightsquigarrow^* \tuple{\ttt{break},H}$. Because then, one of $\mathrm{[alap_1]}$, $\mathrm{[alap_2]}$, or $\mathrm{[alap_3]}$ is applicable. And if $\mathrm{[alap_1]}$ was applicable, we get $\tuple{P!,G} \rightsquigarrow \tuple{P!,H}$, the same arguments can be used on $\tuple{P!,H}$, and hence on all its successors.

First of all, since $P$ contains no \ttt{!}, $\tuple{P,G}$ cannot diverge. If $P$ does not contain a \ttt{break}, $\tuple{P,G}$ cannot get stuck by Lemma \ref{lem:non-stuck-conf}. If $P$ does contain a \ttt{break}, that is never called, the arguments of the proof of Lemma \ref{lem:non-stuck-conf} still apply, and $\tuple{P,G}$ does not get stuck. If $P$ contains a \ttt{break} that is called, that means there is a transition sequence $\tuple{P,G} \rightsquigarrow^* \tuple{\ttt{break},H}$, or $\tuple{P,G} \rightsquigarrow^* \tuple{\ttt{break};Q,H}$, to which we can apply $\mathrm{[break]}$ to get the former.

So either $\tuple{P,G} \rightsquigarrow^* \tuple{\ttt{break},H}$, or $\tuple{P,G}$ neither diverges nor gets stuck, which means $\tuple{P,G}$ must resolve to either a graph or fail.
\end{proof}

\begin{lemma}[Simulating Old Transition Sequences That Are Infinite or Stuck]
\label{lem:infinite-new}
Assume there is an infinite transition sequence $\tuple{P,G} \rightsquigarrow \dots$, or a stuck transition sequence $\tuple{P,G} \rightsquigarrow^* \tuple{P',G'}$. Then for any graph stack $S$, there is an infinite transition sequence $\tuple{P,[G,S]} \to \dots$.
\end{lemma}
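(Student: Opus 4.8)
The plan is to treat the two hypotheses separately: the case of an infinite old sequence (divergence) and the case of a maximal finite old sequence ending in a stuck configuration $\tuple{P',G'}$. The divergence case turns out to be an almost immediate consequence of Lemma \ref{lem:finite-new}, while the stuck case needs an induction that pinpoints \emph{where} the old computation gets stuck, using Lemma \ref{lem:stuck-conf}.

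For divergence, suppose $\tuple{P,G} = \tuple{P_0,G_0} \rightsquigarrow \tuple{P_1,G_1} \rightsquigarrow \tuple{P_2,G_2} \rightsquigarrow \cdots$ is infinite. Each single old step $\tuple{P_i,G_i} \rightsquigarrow \tuple{P_{i+1},G_{i+1}}$ is in particular a finite old computation, so Lemma \ref{lem:finite-new} gives $\tuple{P_i,[G_i,S]} \to^* \tuple{P_{i+1},[G_{i+1},S]}$ for every tail $S$. Since the construction of Lemma \ref{lem:finite-new} simulates each genuine old step by at least one new step, each of these segments is nonempty, and concatenating them (they chain because the tail $S$ is preserved) produces an infinite new sequence $\tuple{P,[G,S]} \to^+ \tuple{P_1,[G_1,S]} \to^+ \tuple{P_2,[G_2,S]} \to^+ \cdots$. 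Here I use that an intermediate $P_i$ of an infinite top-level sequence never leads with an unguarded \ttt{break} (such a configuration reduces in one step to the stuck configuration $\tuple{\ttt{break},\cdot}$ and could not continue), so the hypotheses of Lemma \ref{lem:finite-new} are met.

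For the stuck case I induct on the size of $P$. First, Lemma \ref{lem:finite-new} simulates the finite prefix, giving $\tuple{P,[G,S]} \to^* \tuple{P',[G',S]}$, so it suffices to produce an infinite new sequence out of $\tuple{P',[G',S]}$. By Lemma \ref{lem:stuck-conf}, $P'$ begins with an \ttt{if}, \ttt{try}, or loop whose condition (respectively body) $C$ diverges or gets stuck in the old semantics. Unfolding that leading construct in the new semantics with $\mathrm{[if_1]}$, $\mathrm{[try_1]}$, or $\mathrm{[alap_1]}$ followed by $\mathrm{[try_1]}$ (each used under $\mathrm{[seq_1]}$ if the construct is followed by further commands) reaches a configuration of the form $\tuple{\text{ITE}(C,\ldots),\text{push}(G',[G',S])}$ or $\tuple{\text{TRY}(C,\ldots),\text{push}(G',[G',S])}$, i.e. with the condition $C$ now exposed on top of a backup copy of $G'$. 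Now $C$ is old-bad and strictly smaller than $P$: if $C$ diverges I invoke the divergence case already established, and if $C$ gets stuck I invoke the induction hypothesis; either way $\tuple{C,[G',G',S]}$ admits an infinite new sequence of extended configurations, and each of its steps lifts through $\mathrm{[if_2]}$ or $\mathrm{[try_2]}$ (under $\mathrm{[seq_1]}$ as needed), yielding the desired infinite sequence from $\tuple{P',[G',S]}$.

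The step I expect to be most delicate is the loop sub-case of the stuck argument together with its stack bookkeeping. When the culprit $C$ is a loop body, $C$ may contain a \ttt{break} whose enclosing loop has just been unfolded, so that $C$ no longer satisfies the context conditions needed to apply the induction hypothesis verbatim. The resolution is the observation that, precisely because the surrounding loop is stuck, that \ttt{break} is never reached by any finite computation of $C$ from $G'$ (if it were, rule $\mathrm{[alap_3']}$ would fire and the loop would terminate rather than be stuck); hence the \ttt{break} is inert along the bad computation we follow, and the divergence/induction argument for $C$ goes through with the misplaced \ttt{break} simply carried along -- this is the same style of reachability reasoning used in the proof of Lemma \ref{lem:break}. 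The remaining care is purely in the stack discipline: $\mathrm{[if_1]}/\mathrm{[try_1]}/\mathrm{[alap_1]}$ push a backup copy before $C$ runs, the lifting rules $\mathrm{[if_2]}/\mathrm{[try_2]}$ leave the construct in place while $C$'s divergence plays out on the exposed top of the stack, and the preserved tail $S$ guarantees that the chained segments in the divergence case fit together.
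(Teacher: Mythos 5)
Your proof is correct and takes essentially the same route as the paper's: the divergence case is handled by applying Lemma~\ref{lem:finite-new} to each old step and concatenating the resulting nonempty segments, and the stuck case by an induction that uses Lemma~\ref{lem:stuck-conf} to locate the blocking \ttt{if}, \ttt{try}, or \ttt{!} at the head, unfolds it via $\mathrm{[if_1]}$, $\mathrm{[try_1]}$, or $\mathrm{[alap_1]}$ followed by $\mathrm{[try_1]}$, and then lifts an infinite sequence for the condition through $\mathrm{[if_2]}$/$\mathrm{[try_2]}$ (under $\mathrm{[seq_1]}$ as needed), with divergence of the condition dispatched to the first case and stuck-ness to the induction hypothesis. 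The differences are cosmetic --- you induct on program size where the paper counts \ttt{if}/\ttt{try}/\ttt{!} statements and performs a more elaborate explicit base-case analysis --- and your explicit argument that a \ttt{break} exposed by unfolding a stuck loop is inert (since otherwise $\mathrm{[alap_3']}$ would apply) addresses a context-condition subtlety that the paper's proof silently glosses over.
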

\begin{proof}
First assume there is an infinite transition sequence $\tuple{P,G} \rightsquigarrow \tuple{P_1,G_1} \rightsquigarrow \dots$. To each step $\tuple{P_i,G_i} \rightsquigarrow \tuple{P_{i+1},G_{i+1}}$ in that transition sequence, we can apply Lemma \ref{lem:finite-new} to get a transition sequence $\tuple{P_i,[G_i,S_i]} \to^* \tuple{P_{i+1},[G_{i+1},S_{i+1}]}$ for any graph stack $S_i$ and some $S_{i+1}$. We can concatenate these into an infinite transition sequence.

Now assume there is a stuck transition sequence $\tuple{P,G} \rightsquigarrow^* \tuple{P',G'}$. We can apply Lemma \ref{lem:finite-new} to each step in that sequence to get $\tuple{P,[G,S]} \to^* \tuple{P',[G',S]}$ for each graph stack $S$.

We claim that for any command sequence $P'$ such that $\tuple{P',G'}$ is stuck with respect to $\rightsquigarrow$, there is a diverging transition sequence $\tuple{P',[G',S]} \to \dots$ for any graph stack $S$. This is enough to prove the Lemma. Let us show this by induction on the combined number of \ttt{if}, \ttt{try}, and \ttt{!} statements in $P'$. 

If there are no such statements, then $\tuple{P',G'}$ cannot get stuck by Lemma \ref{lem:stuck-conf}, so there has to be at least one, and $P'$ starts with it. If that one statement is an \ttt{if} or a \ttt{try}, then $\tuple{P',G'}$ cannot get stuck by Lemma \ref{lem:non-stuck-conf}. If it is a \ttt{!} statement, $\tuple{P',G'}$ cannot get stuck by Lemma \ref{lem:non-stuck-loops}. So there have to be at least two \ttt{if}, \ttt{try}, or \ttt{!} statements.

So for our base case, assume $P'$ contains exactly two \ttt{if}, \ttt{try}, or \ttt{!} statements. By Lemma \ref{lem:non-stuck-conf}, $P'$ must contain a \ttt{!} statement. If the two statements are not nested, we can apply Lemmata \ref{lem:non-stuck-conf} and/or \ref{lem:non-stuck-loops} sequentially to conclude that $\tuple{P',G'}$ does not get stuck. So the two statements must be nested, and they must be the start of $P'$ by Lemma \ref{lem:stuck-conf}. Assume the ``inner'' statement is not a \ttt{!} statement. Then the ``outer'' statement must be the \ttt{!} statement. By Lemma \ref{lem:non-stuck-loops}, $\tuple{P',G'}$ does not get stuck. So the ``inner'' statement must be a \ttt{!} statement $Q!$. The loop $Q!$ cannot resolve to a graph or fail because then, the ``outer'' statement could be resolved and $\tuple{P',G'}$ would not be stuck. By Lemma \ref{lem:non-stuck-loops}, $Q!$ cannot get stuck either. So $Q!$ must diverge, and so must the condition or body $C$ of the starting statement of $P'$. So there is an infinite transition sequence $\tuple{C,G'} \rightsquigarrow \dots$, and hence by Lemma \ref{lem:finite-new} to each step in that transition sequence, we get $\tuple{C,[G',S]} \to \dots$ for any graph stack $S$. This serves as a premise for $\mathrm{[if_2]}$ or $\mathrm{[try_2]}$, which we can use to get an infinite transition sequence $\tuple{P',G'} \to \dots$.

Now for the induction step, assume we get infinite transition sequences for programs with at most $k$ \ttt{if}, \ttt{try}, and \ttt{!} statements. Assume $P'$ has $k+1$ such statements. By Lemma \ref{lem:stuck-conf}, since $\tuple{P',G'}$ is stuck, $P'$ must start with one of those statements. So we can apply either $\mathrm{[if_1]}$, $\mathrm{[try_1]}$, or $\mathrm{[alap_1]}$ followed by $\mathrm{[try_1]}$ to get $\tuple{P',[G',S']} \to^+ \tuple{P'',[G',G',S']}$ for any graph stack $S'$, where $P''$ starts with either $\text{ITE}(C,Q,Q')$ or $\text{TRY}(C,Q,Q')$. Now $C$ has at least one less \ttt{if}, \ttt{try}, or \ttt{!} statement than $P'$, so we can apply the induction hypothesis to get an infinite transition sequence $\tuple{C,[G',G',S']} \to \dots$, which can serve as premises for infinitely many applications of $\mathrm{[if_2]}$ or $\mathrm{[try_2]}$ (or these inference rules provide the premises for $\mathrm{[seq_1]}$). Hence we have an infinite transition sequence $\tuple{P',[G',S']} \to^+ \tuple{P'',[G',G',S']} \to \dots$.
\end{proof}

\begin{lemma}[Simulating Finite New Transition Sequences]
\label{lem:finite-prev}
Let $P \in \text{CommandSeq}$, $G \in \mathcal{G}$, $S$ a graph stack, and $X \in \{\tuple{P',\,[G',S']},\, [G',S'],\, \text{fail}\}$, where $P' \in \text{CommandSeq}$, $S'$ is a graph stack, and $G' \in \mathcal{G}$. If $\tuple{P,\,[G,S]} \to^* X$, then there is a transition sequence
\begin{itemize}
    \item $\tuple{P,\,G} \rightsquigarrow^* \tuple{P',\,G'}$ if $X=\tuple{P',\,[G',S']}$.
    \item $\tuple{P,\,G} \rightsquigarrow^* G'$ if $X=[G',S']$.
     \item $\tuple{P,\,G} \rightsquigarrow^* \text{fail}$ if $X=\text{fail}$.
\end{itemize}
\end{lemma}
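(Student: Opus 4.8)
The plan is to prove this as the exact converse of Lemma \ref{lem:finite-new}, using the same organising principle: induction on the combined number of \ttt{if}, \ttt{try}, and \ttt{!} statements in $P$, refined to a lexicographic induction on the pair (number of such statements in $P$, length of the new transition sequence $\tuple{P,[G,S]} \to^* X$). The second component is needed because sequential composition and loop iteration do not decrease the structural count, so I require a secondary measure that strictly decreases once a leading command has been consumed. For the base case, where $P$ contains no \ttt{if}, \ttt{try}, or \ttt{!} statement, none of the $\mathrm{[if_i]}$, $\mathrm{[try_i]}$, or $\mathrm{[alap_i]}$ rules can fire, since the ITE and TRY constructs are only ever introduced by $\mathrm{[if_1]}$, $\mathrm{[try_1]}$, and $\mathrm{[alap_1]}$. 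Hence the only applicable rules are $\mathrm{[call_i]}$, $\mathrm{[seq_i]}$, $\mathrm{[break]}$, $\mathrm{[skip]}$, $\mathrm{[fail]}$, and $\mathrm{[or_i]}$, each of which has an identically-behaving counterpart in the previous semantics. As none of these rules touches anything below the top of the stack (and $\mathrm{[call_1]}$ merely replaces the top), I can read off a matching previous-semantics sequence on the tracked top graph step by step.

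For the inductive step I decompose $P = P_1; P_2$ with $P_1$ not a sequential composition and analyse how the new semantics consumes the leading command $P_1$ (the $\mathrm{[seq_i]}$ rules keep $P_2$ attached until $P_1$ resolves). If $P_1$ is non-branching, it resolves in one step that maps directly to a previous-semantics step; control then passes to $\tuple{P_2, [G'',S]}$ with a strictly shorter remaining sequence, so the induction hypothesis (IH) applies and the two old sequences are concatenated. If $P_1$ is a branching statement, I recognise the whole macro-block it generates: $\mathrm{[if_1]}$ (resp. $\mathrm{[try_1]}$) introduces an ITE (resp. TRY) and pushes a backup, the internal $\mathrm{[if_2]}$/$\mathrm{[try_2]}$ steps advance the condition $C$, and the block is closed by $\mathrm{[if_3]}$/$\mathrm{[if_4]}$ (resp. $\mathrm{[try_3]}$/$\mathrm{[try_4]}$). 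The condition $C$ is itself a command sequence with strictly fewer \ttt{if}/\ttt{try}/\ttt{!} statements than $P$, so applying the IH to the self-contained sub-sequence of condition steps yields a previous-semantics derivation $\tuple{C,G} \rightsquigarrow^+ H$ or $\tuple{C,G} \rightsquigarrow^+ \failrm$; this is precisely the premise needed to fire $\mathrm{[if_1']}$/$\mathrm{[if_2']}$ or $\mathrm{[try_1']}$/$\mathrm{[try_2']}$ in a single old step, after which I recurse on the continuation with a shorter sequence. Loops are handled through the $\mathrm{[alap_1]}$ unfolding into $\tryte{Q}{Q!}{\skiptt}$: one successful pass through the resulting TRY corresponds to a single $\mathrm{[alap_1']}$ step $\tuple{Q!,G} \rightsquigarrow \tuple{Q!,H}$, a failing condition maps to $\mathrm{[alap_2']}$, and reaching an isolated \ttt{break} (closed by $\mathrm{[alap_2]}$) maps to $\mathrm{[alap_3']}$.

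I expect the main obstacle to be the mismatch between the two levels of granularity, which surfaces in two related places. First, the intermediate configurations in the new sequence are genuine extended configurations containing ITE/TRY, so I cannot simply recurse with the command-sequence-only statement on every prefix; the remedy is to recurse only at macro-block boundaries, i.e.\ at configurations whose command part is again an ordinary command sequence, feeding the condition steps to the IH as a stand-alone sub-sequence that starts from the command sequence $C$. Second, and more delicately, the loop-unfolding rule $\mathrm{[alap_1]}$ reaches $\tuple{\tryte{Q}{Q!}{\skiptt}, [G,S]}$, whose command part is a syntactically valid command sequence that is \emph{not} reachable in the previous semantics; so if an endpoint $X$ landed there literally, the stated conclusion would fail. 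I would address this by treating $\tryte{Q}{Q!}{\skiptt}$ as an internal node of the loop macro-block (identified with $Q!$) rather than as a standalone endpoint, so that the command-sequence endpoints to which the lemma is applied are exactly the previous-semantics-reachable ones. Finally, the bookkeeping that the tracked top graph $G'$ suffices — that all backup copies pushed for branching statements have been popped by the time a command-sequence configuration is reached, leaving the original tail $S$ untouched — follows from the same stack-discipline argument as Lemma \ref{lem:stack}.
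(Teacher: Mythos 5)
Your proposal is correct and takes essentially the same route as the paper's own proof: induction on the combined number of \ttt{if}, \ttt{try}, and \ttt{!} statements in $P$, simulating each branching or loop macro-block by applying the induction hypothesis to its condition or body and then firing the matching rule of the previous semantics ($\mathrm{[if_1']}$/$\mathrm{[if_2']}$, $\mathrm{[try_1']}$/$\mathrm{[try_2']}$, $\mathrm{[alap_1']}$--$\mathrm{[alap_3']}$). Your two refinements are sound and in fact sharper than the paper: the lexicographic measure makes explicit the iteration along the sequence that the paper leaves informal, and your observation that an endpoint such as $X=\tuple{\tryte{Q}{Q!}{\skiptt},\,[G,S]}$ --- reachable in one $\mathrm{[alap_1]}$ step, with command part in $\text{CommandSeq}$ yet unreachable under $\rightsquigarrow$ --- falsifies the literal first bullet pinpoints a real defect in the lemma's statement that the paper's proof silently sidesteps by only ever simulating at macro-block boundaries.
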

\begin{proof}
Let us show this lemma by induction on the combined number of \ttt{if}, \ttt{try}, and \ttt{!} statements in $P$. If $P$ has no such statements, no step in $\tuple{P,\,[G,S]} \to^* X$ uses $\mathrm{[if]}$, $\mathrm{[try]}$, or $\mathrm{[alap]}$ rules. So no pushing or popping occurs and only the top of the stack is modified. The applied rules behave identically to those in the previous semantics when identifying the top of the stacks in the new rules with the host graphs in the previous rules. Hence the lemma is satisfied in the base case.

Now assume that the lemma holds for command sequences with $k$ \ttt{if}, \ttt{try}, and \ttt{!} statements, and assume $P$ has $k+1$ of them. By the previous paragraph, we can simulate transition steps that do not involve $\mathrm{[if]}$, $\mathrm{[try]}$, or $\mathrm{[alap]}$ rules. So let $\tuple{Q,[H,S']}$ be a configuration (we will handle the case where $\tuple{Q,[H,S']}$ is an extended configuration, but not a configuration later) in the sequence $\tuple{P,\,[G,S]} \to^* X$ that uses such a rule to get the next state. Let $\mathrm{[r]}$ be that rule. It can only be $\mathrm{[if_1]}$, $\mathrm{[try_1]}$, or $\mathrm{[alap_1]}$ since $\tuple{Q,[H,S']}$ is a configuration and hence does not contain ITE or TRY. 

If $\mathrm{[r]} = \mathrm{[if_1]}$, then we get $\tuple{Q,[H,S']} \to \tuple{\text{ITE}(C,Q_1,Q_2);Q_3,[H,H,S']} \to^+ \tuple{Q_4;Q_3,[H,S']}$, where $Q_1$, $Q_2$, and $Q_3$ are command sequences, and where $Q_4 \in \{Q_1,Q_2\}$ (since this is part of a transition sequence that ends in $X$, a configuration, graph, or fail, we know that the \ttt{if} eventually resolves). So either $\tuple{C,[H,H,S']} \to^* [H',H,S']$ or $\tuple{C,[H,H,S']} \to^* \failrm$. We can apply the induction hypothesis to $C$, which has at most $k$ \ttt{if}, \ttt{try}, and \ttt{!} statements, to get that either $\tuple{C,H} \rightsquigarrow^* H'$ or $\tuple{C,H} \rightsquigarrow^* \failrm$. We can use this as a premise for either $\mathrm{[if_1]}$ or $\mathrm{[if_2]}$ to get $\tuple{Q,H} \rightsquigarrow \tuple{Q_4;Q_3,H}$, where $Q_4 \in \{Q_1,Q_2\}$, which simulates the sequence outlined at the beginning of this case.

If $\mathrm{[r]} = \mathrm{[try_1]}$, then we get either $\tuple{Q,[H,S']} \to \tuple{\text{TRY}(C,Q_1,Q_2);Q_3,[H,H,S']} \to^+ \tuple{Q_1;Q_3,[H',S']}$ or $\tuple{Q,[H,S']} \to \tuple{\text{TRY}(C,Q_1,Q_2);Q_3,[H,H,S']} \to^+ \tuple{Q_2;Q_3,[H,S']}$, where $Q_1$, $Q_2$, and $Q_3$ are command sequences, and $H'$ some graph. We can use the same arguments as in the previous case to get a transition sequence $\tuple{Q,H} \rightsquigarrow \tuple{Q_1;Q_3,H'}$ or $\tuple{Q,H} \rightsquigarrow \tuple{Q_2;Q_3,H}$.

If $\mathrm{[r]} = \mathrm{[alap_1]}$, we have $Q=Q_1!;Q_2$ so either $\tuple{Q_1!;Q_2,[H,S']} \to^* \tuple{Q_1!;Q_2,[H',S']}$, $\tuple{Q_1!;Q_2,$ $[H,S']} \to^* \tuple{Q_2,[H,S']}$, or $\tuple{Q_1!;Q_2,[H,S']} \to^* \tuple{\text{TRY}(\ttt{break},Q_1!,\skiptt);Q_2,[H',H,S']} \to   \tuple{Q_2,[H',$\\ $S']}$. We can conclude that either $\tuple{Q_1,[H,S']} \to^* [H',S']$, $\tuple{Q_1,[H,S']} \to^* \failrm$, or $\tuple{Q_1,[H,S']} \to^* \tuple{\ttt{break},$ $[H',S']}$. By induction hypothesis, we get that either $\tuple{Q_1,H} \rightsquigarrow^* H'$, $\tuple{Q_1,H} \rightsquigarrow^* \failrm$, or $\tuple{Q_1,H} \rightsquigarrow^* \tuple{\ttt{break},H'}$. These can be used as premises of $\mathrm{[alap_1]}$, $\mathrm{[alap_2]}$, or $\mathrm{[alap_3]}$ to conclude that either $\tuple{Q_1!;Q_2,$ $H} \rightsquigarrow \tuple{Q_1!;Q_2,H'}$, $\tuple{Q_1!;Q_2,H} \rightsquigarrow \tuple{Q_2,H}$, or $\tuple{Q_1!;Q_2,H} \rightsquigarrow \tuple{Q_2,H'}$.

Finally, if $\tuple{Q,[H,S']}$ is an extended configuration, but not a configuration, $Q$ must contain an ITE, TRY, or a \ttt{break} that does not satisfy the context conditions. All of these must originate in an \ttt{if}, \ttt{try}, or \ttt{!} statement, and are hence covered by the previous part of the proof.
\end{proof}

Note that the first point of Lemma \ref{lem:finite-prev} only applies to transition sequences between command sequences (they do not contain TRY or ITE constructs). So diverging transition sequences in the new semantics where all command sequences contain ITE or TRY after some step cannot be simulated with diverging transition sequences in the previous semantics.

\begin{theorem}
Let $P \in \text{CommandSeq}$ and $G \in \mathcal{G}$. Then
\renewcommand{\labelenumi}{(\alph{enumi})}
\begin{enumerate}
    \item $[P]G \subseteq \llbracket P \rrbracket G$\/ and
    \item $[P]G \setminus \{\bot\} = \llbracket P \rrbracket G \setminus \{\bot\}$.
\end{enumerate}
\end{theorem}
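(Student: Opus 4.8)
The plan is to obtain both statements directly from the three directed simulation lemmas, each applied with the empty stack as the tail so that the initial stack $[G]$ plays the role of $[G,S]$ with $S$ empty. Throughout I identify a single-graph stack $[G']$ with the graph $G'$; this is legitimate because Lemma \ref{lem:stack} guarantees that every terminal stack reachable from a configuration $\tuple{P,[G]}$ has size one.

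For part (a), I would take an arbitrary $X \in [P]G$ and case-split according to the definition of the old semantic function. If $X \in \mathcal{G} \cup \{\text{fail}\}$, then $\tuple{P,G} \rightsquigarrow^+ X$, and Lemma \ref{lem:finite-new} (with empty tail) yields $\tuple{P,[G]} \to^* X$; since $\tuple{P,[G]}$ is a non-terminal configuration while $X$ is terminal, this sequence is nonempty, giving $\tuple{P,[G]} \to^+ X$ and hence $X \in \llbracket P \rrbracket G$. If instead $X = \bot$, then $P$ can diverge or get stuck from $G$, so there is an infinite or a stuck old transition sequence from $\tuple{P,G}$; Lemma \ref{lem:infinite-new} turns either of these into an infinite new transition sequence $\tuple{P,[G]} \to \dots$, which witnesses that $P$ can diverge from $G$, so $\bot \in \llbracket P \rrbracket G$. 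This establishes the inclusion.

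For part (b), the direction $[P]G \setminus \{\bot\} \subseteq \llbracket P \rrbracket G \setminus \{\bot\}$ follows at once from part (a), since the only element removed on each side is $\bot$ and every surviving element is non-$\bot$. For the converse, I would take $X \in \llbracket P \rrbracket G \setminus \{\bot\}$, so that $X \in [\mathcal{G}] \cup \{\text{fail}\}$ with $\tuple{P,[G]} \to^+ X$. By Lemma \ref{lem:stack}, $X$ is either a single-graph stack $[G']$ or $\text{fail}$, and Lemma \ref{lem:finite-prev} (empty tail) then produces $\tuple{P,G} \rightsquigarrow^* G'$ or $\tuple{P,G} \rightsquigarrow^* \text{fail}$; this is again nonempty because $\tuple{P,G}$ is non-terminal, so $X \in [P]G$, and as $X \neq \bot$ we conclude $X \in [P]G \setminus \{\bot\}$.

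The one genuinely delicate point is what I must deliberately \emph{not} try to prove: the converse $\bot \in \llbracket P \rrbracket G \Rightarrow \bot \in [P]G$ is false, since (as noted after Lemma \ref{lem:finite-prev}) a new computation that eventually remains inside ITE/TRY constructs forever can diverge with no corresponding old divergence or stuck state. The statement of the theorem is engineered precisely to sidestep this: part (a) claims only a one-directional inclusion, and part (b) strips $\bot$ off both sides before comparing. Consequently the whole argument needs nothing beyond the three directed simulation lemmas together with the stack-size lemma, and never invokes an (unavailable) converse simulation for $\bot$.
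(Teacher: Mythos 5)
Your proof is correct and takes essentially the same route as the paper's: part (a) for graphs and fail via Lemma \ref{lem:finite-new}, part (a) for $\bot$ via Lemma \ref{lem:infinite-new}, and the nontrivial inclusion in part (b) via Lemma \ref{lem:finite-prev}. Your extra care (instantiating the lemmas with the empty tail stack, invoking Lemma \ref{lem:stack} to identify terminal stacks with single graphs, and upgrading $\to^*$ to $\to^+$ by non-terminality of the start configuration) merely spells out details the paper's two-line proof leaves implicit.
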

\begin{proof}
Lemma \ref{lem:finite-new} guarantees that (a) holds for graphs and fail, and Lemma \ref{lem:infinite-new} guarantees that (a) holds for $\bot$. Furthermore, (b) follows from (a) and Lemma \ref{lem:finite-prev}.
\end{proof}

\section{Conclusion}
\label{sec:conclusion}

We have introduced a new operational semantics for the graph programming language GP\,2. Unlike the previous semantics, this one is entirely small-step and non-blocking. As a consequence, it accurately models computations which intuitively should diverge (and do so in the implementation). In particular, the new semantic function correctly lists $\bot$ as an outcome when there is a computation in which the condition of a branching statement or the body of a loop diverges. We also obtain finite nondeterminism, meaning that for every configuration there are only finitely many choices for the next transition step.

Furthermore, we have shown that the new semantic function is an extension of the previous one, and that they are equivalent excluding divergence.

In future work, the new semantics should serve as a solid underpinning for setting up a time and space complexity theory for GP\,2. Its small-step nature is crucial to defining atomic computation steps. Such a theory could possibly be automated akin to the resource analysis in \cite{Moser20}.

Another aspect of the GP\,2 semantics is that it is orthogonal to the definition of the transformation rules. The inference rules that depend on the domain of graph transformation only need the definition of a rule application ($\mathrm{[call_1]}$) and the information when such an application fails ($\mathrm{[call_2]}$). Hence this semantics could be used as a foundation for GP\,2-like programming languages over other rule-based domains, such as string rewriting \cite{BookOtto93} or term rewriting \cite{BaaderNipkow98a}.

\bigskip \noindent\textbf{Acknowledgements.} We are grateful to the anonymous referees whose comments helped to improve the presentation of this paper.

\bibliographystyle{eptcs}
\bibliography{main}


\end{document}